\newcommand{\Sl}[1]{{F}^{#1}}
\journal{Performance Evaluation}
\begin{document}
\begin{frontmatter}

\title{heSRPT:Parallel Scheduling to Minimize Mean Slowdown}

\author[cmu]{Benjamin Berg \corref{bsb}}
\cortext[bsb]{Corresponding Author.  This author was supported by a Facebook Graduate Fellowship.}
\ead{bsberg@cs.cmu.edu}

\author[mu]{Rein Vesilo}
\ead{rein.vesilo@mq.edu.au}

\author[cmu]{Mor Harchol-Balter \corref{mhb}}
\cortext[mhb]{This author was supported by: NSF-CMMI-1938909, NSF-CSR-1763701, NSF-XPS-1629444, and a Google 2020 Faculty Research Award.}
\ead{harchol@cs.cmu.edu}

\address[cmu]{Carnegie Mellon University}
\address[mu]{Macquarie University}

							% Activate to display a given date or no date
\begin{abstract}
Modern data centers serve workloads which are capable of exploiting parallelism.  
When a job parallelizes across multiple servers it will complete more quickly, but jobs receive diminishing returns from being allocated additional servers.
Because allocating multiple servers to a single job is inefficient, it is unclear how best to allocate a fixed number of servers between many parallelizable jobs.

This paper provides the first optimal allocation policy for minimizing the mean slowdown of parallelizable jobs of known size when all jobs are present at time 0.
Our policy provides a simple closed form formula for the optimal allocations at every moment in time.
Minimizing mean slowdown usually requires favoring {\em short} jobs over long ones (as in the SRPT policy).   
However, because parallelizable jobs have sublinear speedup functions, system efficiency is also an issue.   
System efficiency is maximized by giving \emph{equal} allocations to all jobs and thus competes with the goal of prioritizing small jobs.  
Our optimal policy, high-efficiency SRPT (heSRPT), balances these competing goals.
heSRPT completes jobs according to their size order, but maintains overall system efficiency by allocating some servers to each job at every moment in time.
Our results generalize to also provide the optimal allocation policy with respect to mean flow time.

Finally, we consider the online case where jobs arrive to the system over time.
While optimizing mean slowdown in the online setting is even more difficult, we find that heSRPT provides an excellent heuristic policy for the online setting.
In fact, our simulations show that heSRPT significantly outperforms state-of-the-art allocation policies for parallelizable jobs.
\end{abstract}

\begin{keyword}
Parallel Scheduling, Server Allocation, Optimization, Speedup Curves, Slowdown, Flow Time
\end{keyword}

\end{frontmatter}

\section{Introduction}
Modern data centers serve workloads which are capable of exploiting parallelism.  
When a job parallelizes across multiple servers it will complete more quickly.
However, it is unclear how to share a limited number of servers between many parallelizable jobs.

In this paper we consider a typical scenario where a data center composed of $N$ servers will be tasked with completing a set of $M$ parallelizable jobs, where typically $M$ is much smaller than $N$.
In our scenario, each job has a different inherent size (service requirement) which is known up front to the system.
In addition, each job can utilize any number of servers at any moment in time.
These assumptions are reasonable for many parallelizable workloads such as training neural networks using TensorFlow \cite{abadi2016tensorflow,lin2018model}. 
Our goal in this paper is to allocate servers to jobs so as to minimize the \emph{mean slowdown} across all jobs, where the slowdown of a job is the job's completion time divided by its running time if given exclusive access to all $N$ servers.
Slowdown is a measure of how a job was interfered with by other jobs in the system, and is often the metric of interest in the theoretical parallel scheduling literature (where it is also called stretch) \cite{muthukrishnan1999online}, as well as the HPC community (where it is called expansion factor) \cite{jackson2000simulation}.

\begin{figure}[t]
\centering
\includegraphics[width=.70\textwidth]{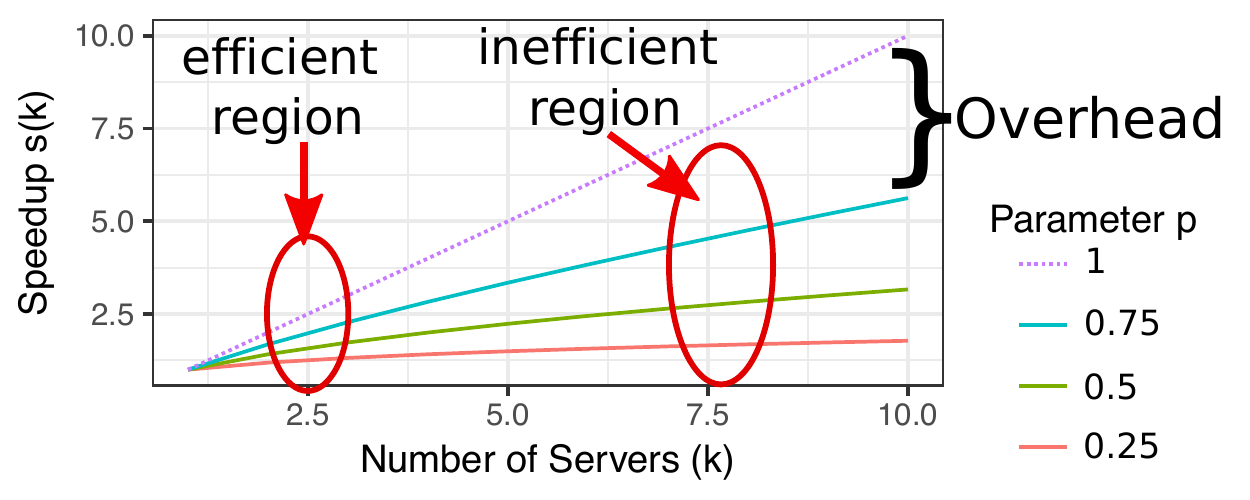}
    \caption{A variety of speedup functions of the form $s(k)=k^p$, shown with varying values of $p$.
    When $p=1$ we say that jobs are \emph{embarrassingly parallel}, and hence we consider cases where $0 < p <1$.
    Note that all functions in this family are concave and lie below the embarrassingly parallel speedup function ($p=1$).}

\label{fig:speedups}
\end{figure}

What makes this problem difficult is that jobs receive a concave, sublinear speedup from parallelization -- jobs have a decreasing marginal benefit from being allocated additional servers (see Figure \ref{fig:speedups}).
Hence, in choosing a job to receive each additional server, one must keep the overall efficiency of the system in mind.
The goal of this paper is to determine the optimal allocation of servers to jobs where all jobs follow a realistic sublinear speedup function.

It is clear that the optimal allocation policy will depend heavily on the jobs' speedup -- how parallelizable the jobs being run are.  
To see this, first consider the case where each job is \emph{embarrassingly parallel} (see Figure \ref{fig:speedups}), and can be parallelized perfectly across an arbitrary number of servers.  
In this case, we observe that the entire data center can be viewed as a \emph{single server} that can be perfectly utilized by or shared between jobs.
Hence, from the single server scheduling literature, it is known that the Shortest Remaining Processing Time policy (SRPT) will minimize the mean slowdown across jobs \cite{smith1978new}.
By contrast, if we consider the case where jobs are hardly parallelizable, a single job receives very little benefit from additional servers.
In this case, the optimal policy is to divide the system equally between jobs, a policy called EQUI.  
In practice, a realistic speedup function usually lies somewhere between these two extremes and thus we must balance a trade-off between the SRPT and EQUI policies in order to minimize mean slowdown.  
Specifically, since jobs are \emph{partially} parallelizable, it is still beneficial to allocate more servers to smaller jobs than to large jobs.  
The optimal policy with respect to mean slowdown must split the difference between these policies, figuring out how to favor short jobs while still respecting the overall efficiency of the system.

In this paper, we present the optimal allocation policy with respect to mean slowdown, which balances the tradeoff between EQUI and SRPT.
We call this policy \emph{high efficiency SRPT}.
Our analysis considers the case where all jobs are present in the system at time 0.
While this is a common case in practice, it is also interesting to consider an online version of the problem, where jobs arrive over time.
Unfortunately, it has been shown that, in general, no optimal policy exists to minimize mean slowdown in the online case \cite{bansal2003server}.
We demonstrate that heSRPT, which is optimal in the offline case, provides an excellent heuristic policy, Adaptive-heSRPT, for the online case. Adaptive-heSRPT often performs an order of magnitude better than policies previously suggested in the literature.

\begin{figure}[t]
\centering
\includegraphics[width=.75\textwidth]{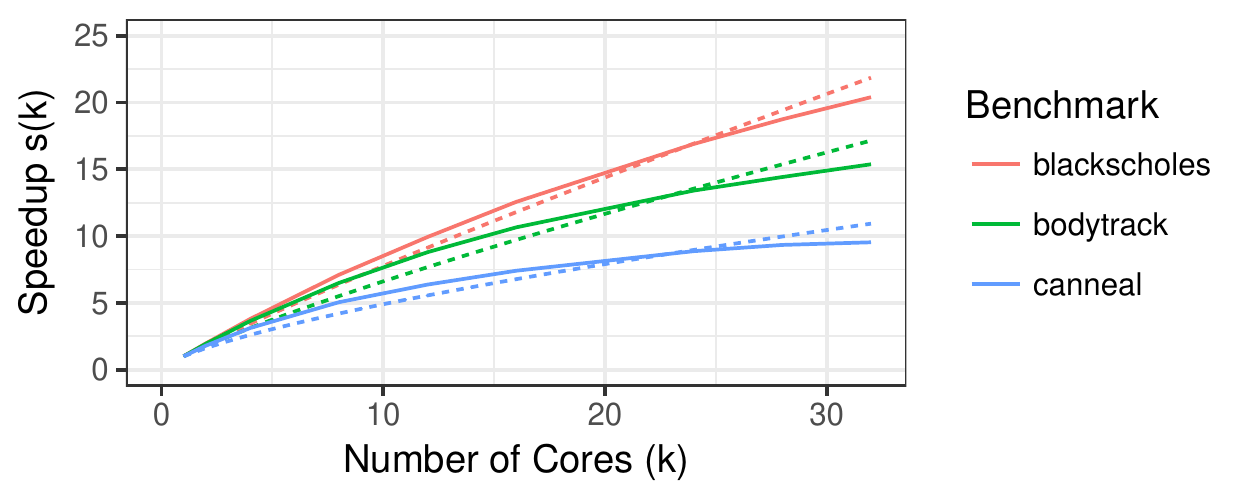}
    \caption{Various speedup functions of the form $s(k)=k^p$ (dotted lines) which have been fit to real speedup curves (solid lines) measured from jobs in the PARSEC-3 parallel benchmarks\cite{zhan2017parsec3}.  The three jobs, blackscholes, bodytrack, and canneal, are best fit by the functions where $p=.89$, $p=.82$, and $p=.69$ respectively.}

\label{fig:parsec}
\end{figure}

\subsection*{Our Model}
Our model assumes there are $N$ identical servers which must be allocated to $M$ parallelizable jobs.  All $M$ jobs are present at time $t=0$.
Job $i$ is assumed to have some inherent size $x_i$ where, without loss of generality (WLOG),
$$x_1 \geq x_2 \geq \ldots \geq x_M.$$

In general we will assume that all jobs follow the \emph{same} speedup function, $s : \mathbb{R}^+ \rightarrow \mathbb{R}^+$, which is of the form
$$s(k) = k^p \qquad$$
for some $0 <p<1$.
Specifically, if a job $i$ of size $x_i$ is allocated $k$ servers, it will complete at time
$$\frac{x_i}{s(k)}.$$
In general, the number of servers allocated to a job can change over the course of the job's lifetime.
It therefore helps to think of $s(k)$ as a \emph{rate}\footnote{WLOG we assume the service rate of a single server to be 1.  More generally, we could assume the rate of each server to be $\mu$, which would simply replace $s(k)$ by $s(k)\mu$ in every formula.} of service where the remaining size of job $i$ after running on $k$ servers for a length of time $t$ is
$$x_i - t \cdot s(k).$$
We choose the family of functions $s(k)=k^p$ because they are (i) sublinear and concave, (ii) can be fit to a variety of empirically measured speedup functions (see Figure \ref{fig:parsec}) \cite{zhan2017parsec3}, and (iii) simplify the analysis.  Note that \cite{Hill:2008:ALM:1449375.1449387} assumes $s(k) = k^p$ where $p=0.5$ and explicitly notes that using speedup functions of another form does not significantly impact their results. %add Hill et. al to prior work.

In general, we assume that there is some policy, $P$, which allocates servers to jobs at every time, $t$.
When describing the state of the system, we will use $m^P(t)$ to denote the number of remaining jobs in the system at time $t$, and $x^P_i(t)$ to denote the remaining size of job $i$ at time $t$.
To describe the state of each job at time $t$, let $W^P_i(t)$ denote the total amount of work done by policy $P$ on job $i$ by time $t$.
Let $W^P_i(t_1,t_2)$ be the amount of work done by policy $P$ on job $i$ on the interval $[t_1,t_2)$. 
That is, 
$$W^P_i(t_1,t_2) = W^P_i(t_2)-W^P_i(t_1).$$
We denote the completion time of job $i$ under policy $P$ as $T^P_i$.
%Hence, the slowdown of job $i$ under policy $P$, $S^P_i$, can be written as
%$$S^P_i = \frac{T^P_i}{x_i}.$$
When the policy $P$ is implied, we will drop the superscript.

We will assume that the number of servers allocated to a job need not be discrete.
In general, we will think of the $N$ servers as a \emph{single, continuously divisible resource}. 
Hence, the policy $P$ can be defined by an \emph{allocation function} $\bm{\theta}^P(t)$ where
$$\bm{\theta}^P(t) = (\theta^P_{1}(t), \theta^P_{2}(t), \ldots, \theta^P_{M}(t)).$$
Here, $0 \leq \theta^P_i(t) \leq 1$ for each job $i$, and $\sum_{i=1}^{M} \theta_i^P(t) \leq 1$. 
%\footnote{Any policy for which $\sum_{i=1}^{M} \theta_i^P(t) < 1$ can be easily improved, but arguing about such policies will be helpful throughout the paper}.
An allocation of $\theta^P_{i}(t)$ denotes that under policy $P$, at time $t$, job $i$ receives a speedup of $s(\theta^P_{i}(t)\cdot N)$.  Completed jobs are allocated 0 servers.

For each job, $i$, we define a corresponding \emph{weight} $w_i > 0$.
We then define the \emph{weighted flow time} for a set of jobs under policy $P$, $F^P$, to be
$$\Sl{P} = \sum_{i=1}^M w_i \cdot T_i.$$
By manipulating the weights associated with each job, one can derive metrics with different intuitive meanings.
We say that weights \emph{favor small jobs} if larger weights are always assigned to smaller jobs.
That is, if
$$w_1 \leq w_2 \leq \ldots \leq w_M.$$

The class of weighted flow time metrics where weights favor small jobs includes several popular metrics.
We are primarily interested in the mean slowdown of a policy $P$, $\overline{S}^P$, which is defined as
$$\overline{S}^P = \frac{1}{M}\cdot \sum_{i=1}^M \frac{T^P_i}{x_i/s(N)}.$$
The objective of minimizing mean slowdown is equivalent to minimizing weighted flow time when $w_i=\frac{1}{x_i/s(N)}$.
Clearly, weights favor small jobs in this setting.
Similarly, the objective of minimizing mean flow time is equivalent to minimizing weighted flow time when $w_i=1$ for each job, $i$.

For the sake of generality, we consider the problem of finding the policy $P^*$ which minimizes the weighted flow time for a set of $M$ jobs when weights favor small jobs.
Let $F^*$ be the weighted flow time under this optimal policy.
We will denote the allocation function of the optimal policy as $\bm{\theta}^*(t)$.  Similarly, we let $m^*(t)$, $x^*_i(t)$, $W^*_i(t)$ and $T^*_i$ denote the corresponding quantities under the optimal policy.

\subsection*{Why Server Allocation is Counter-intuitive}
Consider a simple system with $N=10$ servers and $M=2$ identical jobs of size 1, where $s(k) = k^{.5}$, and where we wish to minimize mean flow time (which is equivalent to slowdown in this case).
A queueing theorist might look at this problem and say that to minimize flow time, we should use the SRPT policy, first allocating all servers to job one and then all servers to job two.
However, this causes the system to be very inefficient.
Another intuitive argument would be that, since everything in this system is symmetric, the optimal allocation should be symmetric.
Hence, one might think to allocate half the servers to job one and half the servers to job two.
While this equal allocation does maximize the efficiency of the system by ensuring that neither job receives too many servers, it does \emph{not} minimize their mean flow time.
Theorem \ref{thm:opt} will show that the optimal policy in this case is to allocate $75\%$ of the servers to job one and $25\%$ of the servers to job two.
In our simple, symmetric system, the optimal allocation is very asymmetric!
Note that this asymmetry is \emph{not} an artifact of the form of the speedup function used.
For example, if we had instead assumed that the speedup function $s$ was Amdahl's Law \cite{Hill:2008:ALM:1449375.1449387} with a parallelizable fraction of $f=.9$, the optimal split is to allocate $63.5\%$ of the system to one of the jobs.
Instead, the optimal policy balances the tradeoff between using an efficient equal allocation and using the inefficient SRPT policy which favors small jobs.
If we imagine a set of $M$ arbitrarily sized jobs, one suspects that the optimal policy again favors shorter jobs, but it is not obvious how to calculate the exact allocations for this policy.

\subsection*{Why Finding the Optimal Policy is Hard}
At first glance, solving for the optimal policy seems amenable to classical optimization techniques.
However, naive application of these techniques would require solving $M!$ optimization problems, each consisting of $O(M^2)$ variables and $O(M)$ constraints.
Furthermore, although these techniques could produce the optimal policy for a single problem instance, it is unlikely that they would yield a closed form solution.
We instead advocate for finding a closed form solution for the optimal policy, which allows us to build intuition about the underlying dynamics of the system.

To understand the source of this complexity, we will first consider the problem of minimizing the total flow time of a set of just two jobs of sizes $x_1$ and $x_2$ respectively.
If we assume that the optimal policy completes job 2 first, we can write an expression for the total flow time of the two jobs as follows:
$$T_1 + T_2 = \frac{2 \cdot x_2}{s(\theta_2(0))} + \frac{\left(x_1 - \frac{s(\theta_1(0))\cdot x_2}{s(\theta_2(0))}\right)}{s(N)}.$$
The first term in this expression describes the total flow time accrued between time 0 and time $T_2$.
The duration of this period is $\frac{x_2}{s(\theta_2(0))}$, and because there are two jobs in the system during this period, the total flow time accrued during the period is $\frac{2 \cdot x_2}{s(\theta_2(0))}$.
The second term encodes the remaining time required to finish job 1 after job 2 completes.
This expression assumes that job allocations only change at the time of a departure, which we will prove formally in Theorem \ref{thm:constant}.
More interestingly, however, this expression \emph{implicitly encodes} the idea that the policy finishes job 2 before job 1.
If, on the other hand, job 1 finishes first, the expression would change to
$$T_1 + T_2 = \frac{2 \cdot x_1}{s(\theta_1(0))} + \frac{\left( x_2 - \frac{s(\theta_2(0))\cdot x_1}{s(\theta_1(0))}\right)}{s(N)}.$$

This has two main implications.
First, it is possible that the allocation policy which minimizes a particular expression for total flow time does not complete jobs in the order encoded in the expression.
In this case, the value of the expression is meaningless -- it does not equal the total flow time of the jobs under the computed policy.
Hence, to find the optimal policy with respect to a particular completion order, one must minimize the corresponding expression for total flow time subject to constraints which maintain the completion order of the allocation policy.
Second, because the objective function depends on the completion order of the jobs, there is no single function to try to minimize to recover an optimal policy.
Instead, we are interested in the \emph{global} minimum across all of the $M!$ completion orders, each of which has its own \emph{local} minimum that can be obtained by constrained minimization.

When there are only two jobs, it is tractable to directly solve for the optimal policy via known constrained optimization methods.
One can  use Lagrange multipliers with two constraints -- one to ensure a valid allocation policy and one to enforce the completion order.
One must solve two such optimization problems corresponding to each of the potential completion orders.
However, given a set of $M$ jobs with $M!$ possible completion orders, this naive approach would require solving $M!$ optimization problems, each consisting of $O(M^2)$ variables used to define an allocation policy and $O(M)$ constraints.
Even if some of this complexity could be elided by, for instance, proving the completion order of the optimal policy (see Theorem \ref{thm:sjf}), Lagrange multipliers are unlikely to emit a closed form given such a complex objective function and large set of constraints.

Additionally, one may think to apply classical techniques from the deterministic scheduling literature.
Specifically, one technique is to show that the problem of minimizing weighted flow time can be reduced to solving a linear program where the feasible region is a polymatroid \cite{yao2002dynamic,bertsimas1996conservation}.
The polymatroid technique can be used to show that a system is \emph{indexable} (the optimal policy is a simple priority policy) or even \emph{decomposable} (job priorities do not depend on the other jobs in the system).
This technique can be used to provide a greedy algorithm for obtaining the optimal policy.
Unfortunately, the polymatroid technique is not straightforward to apply in our case.

Specifically, one generally proceeds by showing that a problem satisfies the so-called \emph{generalized conservation laws} \cite{yao2002dynamic,bertsimas1996conservation}.
However, these conservation laws require scheduling policies which are \emph{work-conserving}, meaning the total rate at which the system completes work remains constant over time.
Allocation policies in our setting are not work conserving in general, and the allocations used by the optimal policy do not maintain a constant work rate (see Figure \ref{fig:3jobs}).
Additionally, due to the form of the speedup function $s(k)=k^p$, the region of achievable weighted flow times is not a polytope.
Furthermore, the optimal policy we derive in Theorem \ref{thm:opt} will show that our problem is not decomposable for the case of minimizing weighted flow time where weights favor small jobs.
While there may exist a reduction of our problem that would allow one to establish some form of conservation laws, it is far from obvious what the ``conserved quantity'' would be for our problem which would allow for application of these techniques.

Prior work has investigated the related problem of scheduling flows in networks where the system capacity evolves over time \cite{sadiq2010balancing}.
However, this work derives an optimal policy only in the case when the work rates of each job are constrained to lie in a series of polymatroids that describe the system capacity at each moment in time.
This assumption does not hold in our setting where the work rate of each job is defined by the speedup function $s(k)=k^p$.

Hence, standard techniques do not appear to be compatible with our goal of finding the optimal policy with respect to weighted flow time.

\subsection*{Contributions}
In Section \ref{sec:flow}, we provide a complete overview of our results, but below we highlight our main contributions. 
\begin{itemize}[leftmargin=.45cm]
    \item To derive the optimal allocation function, we first develop a new technique in Section \ref{sec:slow} to reduce the dimensionality of the optimization problem. 
This dimensionality reduction leverages two key properties of the optimal policy. 
First, in Section \ref{sec:prelim}, we show that the optimal policy must complete jobs in shortest-job-first order.
Then, in Section \ref{sec:sf}, we prove the scale-free property of the optimal policy which illustrates the optimal substructure of the optimal policy.
These insights reduce the problem of solving $M!$ optimization problems of $O(M^2)$ variables and $O(M)$ constraints to the problem of solving one, unconstrained optimization problem of exactly $M$ variables.  

\item In Section \ref{sec:opt}, we solve our simplified optimization problem to derive the first closed form expression for the optimal allocation of $N$ servers to $M$ jobs which minimizes weighted flow time when weights favor small jobs.  At any moment in time $t$ we define
$$\bm{\theta}^*(t) = (\theta^*_1(t), \theta^*_2(t), \ldots , \theta^*_M),$$
where $\theta^*_i(t)$ denotes the fraction of the $N$ servers allocated to job $i$ at time $t$.
Note that $\theta^*_i(t)$ does not depend on $N$.
Our optimal allocation balances the size-awareness of SRPT and the high efficiency of EQUI.  
We thus refer to our optimal policy as \emph{high efficiency SRPT} (heSRPT) (see Theorem \ref{thm:opt}).
We also provide a closed form expression for the weighted flow time under heSRPT (see Theorem \ref{thm:optt}).

\item 
%In Section \ref{sec:gen}, we show how heSRPT generalizes to the optimal policy for a broader class of objective functions besides mean slowdown, including mean flow time.  We provide the optimal policy with respect to mean flow time in Theorem \ref{thm:optflow}. 
In Section \ref{sec:numerical}, we numerically compare the optimal policies with respect to both mean slowdown and mean flow time to other heuristic policies proposed in the literature and show that our optimal policies significantly outperform these competitors.

\item Finally, Section \ref{sec:numerical:online} turns to the online setting where jobs arrive over time.
We propose an online version of heSRPT called \emph{Adaptive-heSRPT} which uses the allocations from heSRPT to recalculate server allocations on every arrival and departure.
Adaptive-heSRPT significantly outperforms competitor policies from the literature in simulation, often by an order of magnitude.
\end{itemize}

\section{Prior Work}
Despite the prevalence of parallelizable data center workloads, it is not known, in general, how to optimally allocate servers to a set of parallelizable jobs.  
The state-of-the-art in production systems is to let the \emph{user} decide their job's allocation by reserving the resources they desire \cite{verma2015large}, and then to allow the system to pack jobs onto servers \cite{ren2016clairvoyant}.
Users reserve resources greedily, leading to low system efficiency.
Many workloads consist of \emph{malleable} jobs which have the capability to change their degree of parallelism as they run \cite{gupta2014towards,cera2010supporting}.
We seek to improve the status quo by allowing the \emph{system} to choose each job's server allocation at every moment in time.

The closest work to our results is \cite{lin2018model}, which considers jobs which follow a realistic speedup function and have known sizes.
\cite{lin2018model} also allows server allocations to change over time.
\cite{lin2018model} proposes and evaluates heuristic policies such as HELL and KNEE, but they make no theoretical guarantee about the performance of their policies.

Another related work, \cite{berg2018}, assumes that jobs follow a concave speedup function and allows server allocations to change over time.
However, unlike our work, \cite{berg2018} assumes that job sizes are \emph{unknown} and are drawn from an exponential distribution.
\cite{berg2018} concludes that EQUI is the optimal allocation policy.
However, assuming unknown exponentially distributed job sizes is highly pessimistic since this means job sizes are impossible to predict, even as a job ages.

The performance modeling community has considered scheduling in multi-server systems with the goal of minimizing mean response time \cite{harchol2009surprising,Nelson1993PerfromEval,lu2011join,gupta2007analysis,tsitsiklis2011power} or slowdown \cite{wierman2005nearly,hyytia2012minimizing}, but this work has not considered systems where a single job can run on multiple servers.
An exception to this is the work on Fork-Join \cite{ko2004response, wang2019delay} in which a single job is composed of multiple tasks that may run in parallel.
In Fork-Join models, however, the level of parallelism of jobs is fixed and is not chosen by the scheduling policy.

The SPAA/parallel community has studied the problem of allocating servers to jobs which follow arbitrary speedup functions in order to minimize flow time\cite{im2016competitively,Edmonds1999SchedulingIT, edmonds2009scalably,Agrawal:2016:SPJ:2935764.2935782}.
Like our paper, \cite{im2016competitively} considers jobs of known size while \cite{Edmonds1999SchedulingIT, edmonds2009scalably,Agrawal:2016:SPJ:2935764.2935782} consider jobs of unknown size.
These papers use competitive analysis, which assumes that job sizes, arrival times, and even speedup functions are adversarially chosen.  
They conclude that a variant of EQUI is $(1+\epsilon)$-speed $O(1)$-competitive when job sizes are unknown \cite{edmonds2009scalably}.
When job sizes are known, a combination of SRPT and EQUI is $O(\log P)$-competitive, where $P$ is the ratio of the largest job size to the smallest \cite{im2016competitively}.
The SPAA community also considers minimizing mean slowdown with non-parallelizable jobs \cite{anand2012resource,moseley2013complexity,bussema2006greedy}.
Sadly, O(1)-competitive policies for slowdown do not exist in general \cite{bansal2003server}.

Instead of considering speedup functions, the SPAA community often models each job as a DAG of interdependent tasks \cite{blumofe1999scheduling,bampis2014note,bodik2014brief,narlikar1999space}.
This DAG encodes precedence constraints between tasks, and thus implies how parallelizable a job is at every moment in time.
It is not clear how to optimally schedule a \emph{single} DAG job on many servers\cite{chowdhury2013oblivious}.
The problem only gets harder if tasks are allowed to run on multiple servers \cite{du1989complexity,chen2018improved}.
Our hope is that by modeling parallelism using speedup functions, we can address problems that would be intractable in the DAG model.

Our model shares some similarities with coflow scheduling \cite{jahanjou2017asymptotically, chowdhury2014efficient, qiu2015minimizing,shafiee2017brief,khuller2016brief} where one allocates a continuously divisible resource, link bandwidth, to a set of flows to minimize mean flow time.
However, here there is usually no explicit notion of a flow's speedup function.
The most applicable work here is \cite{aalto2011optimal}, which explores the tradeoff between efficiency and opportunistic scheduling in wireless networks.
This work considers mean flow time, not mean slowdown.
Section \ref{sec:gen} shows how our results generalize for other metrics, such as mean flow time, in addition to mean slowdown.

%Our model also shares some similarities with the coflow scheduling problem \cite{jahanjou2017asymptotically, chowdhury2014efficient, qiu2015minimizing,shafiee2017brief,khuller2016brief}.  In coflow scheduling, one must allocate a continuously divisible resource, link bandwidth, to a set of network flows to minimize mean flow time.   Unlike our model, there is no explicit notion of a flow's speedup function here.  Given that this problem is NP-Hard, prior work examines the problem via heuristic policies \cite{chowdhury2014efficient}, and approximation algorithms \cite{qiu2015minimizing,jahanjou2017asymptotically}.

\section{Overview of Our Results}
\label{sec:flow}
%This is equivalent to minimizing the \emph{total} slowdown for a set of $M$ jobs, and thus we consider minimizing total slowdown for the remainder of this section.
%We denote the total slowdown under a policy $P$ to be
%$$\Sl{P} = M \cdot \overline{S}^P.$$
Our goal is to determine the optimal allocation of servers to jobs at every time, $t$, in order to minimize the weighted flow time of a set of $M$ jobs of known size where weights favor small jobs.
We derive a closed form for the optimal allocation function 
$$\bm{\theta}^*(t) = (\theta_1^*(t),\theta_2^*(t), \ldots, \theta_M^*(t))$$
which defines the allocation for each job at any moment in time, $t$, that minimizes weighted flow time.  
We now provide an overview of the main theorems from this paper.

We begin by showing that the optimal policy completes jobs in \emph{Shortest-Job-First} (SJF) order.
The proof of this theorem uses an interchange argument to show that any policy which violates the SJF completion order can be improved.
Because jobs follow a concave speedup function, a standard interchange proof fails.
Our proof requires careful accounting, and interchanges servers between many jobs simultaneously.
This claim is stated in Theorem \ref{thm:sjf}.

\begin{restatable*}[Optimal Completion Order]{thm}{sjf}
    \label{thm:sjf}
    The optimal policy completes jobs in Shortest-Job-First (SJF) order:
    $$M, M-1, M-2, \ldots, 1.$$
\end{restatable*}

\noindent Since jobs are completed in SJF order, we can also conclude that, at time $t$, the jobs left in the system are specifically jobs $1, 2, \ldots, m(t)$.
Theorem \ref{thm:sjf} is proven in Section \ref{sec:prelim}.

Theorem \ref{thm:sjf} does not rely on the specific form of the speedup function -- it holds if $s(k)$ is any increasing, strictly concave function.

Besides completion order, the other key property of the optimal allocation that we exploit is the \emph{scale-free property}.  Our scale-free property states that for any job, $i$, job $i$'s allocation relative to jobs completed after job $i$ (jobs larger than job $i$) is constant throughout job $i$'s lifetime.  This property is stated formally in Theorem \ref{thm:sf}. 

\begin{restatable*}[Scale-free Property]{thm}{sf}
    \label{thm:sf}
    Under the optimal policy, for any job, $i$, there exists a constant $c^*_i$ such that, for all $t<T^*_i$
    $$\frac{\theta_{i}^*(t)}{\sum_{j=1}^{i} \theta_{j}^*(t)} = c^*_i.$$
\end{restatable*}

The scale-free property has an intuitive interpretation.
One can imagine the optimal policy as starting with some optimal allocation function $\bm{\theta}^*(0)$, which gives a $\theta^*_i(0)$ fraction of the servers to job $i$ at time 0.
When job $M$ completes, it will leave behind a set of $\theta^*_M(0) \cdot N$ newly idle servers.
The scale-free property tells us that the new value of the optimal allocation is found by reallocating these idle servers to each job $i<M$ in proportion to $\theta^*_i(0)$.
That is, of the $\theta^*_M(0) \cdot N$ newly available servers, job $i$ should receive 
$$\frac{\theta^*_i(0)}{\sum_{j=1}^{M-1} \theta_{j}^*(0)}\cdot \theta^*_M(0) \cdot N$$
additional servers.

\begin{figure}[ht]
\centering
\includegraphics[width=.9\textwidth]{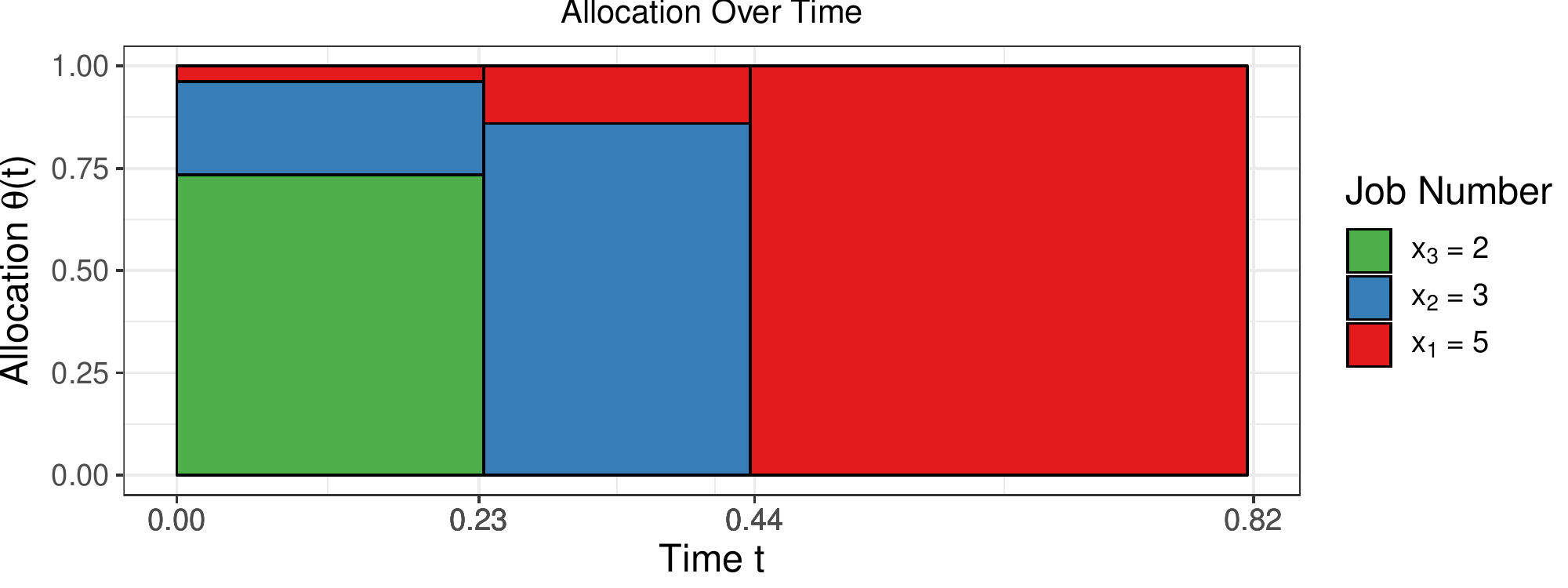}
    \caption{Allocations made by the optimal allocation policy with respect to mean slowdown, heSRPT, completing a set of $M=3$ jobs where the speedup function is $s(k)=k^{.5}$ and $N=100$.
    Job 3 completes at time $t=.23$, job 2 completes at time $t=.44$ and job 1 completes at time $t=.82$.
    Jobs are finished in Shortest Job First order.  Rather than allocating the whole system to the shortest job, heSRPT optimally shares the system between all active jobs.}
\label{fig:3jobs}
\end{figure}

An example of the scale-free property in action can be seen in Figure \ref{fig:3jobs}.
Here, the optimal allocation policy gives all 3 jobs a non-zero allocation a time $t=0$.
When job 3 completes, its servers are reallocated to jobs 1 and 2.
The allocations of job 1 and job 2 both increase, but the ratio of these jobs' allocations remains constant.

The scale-free property provides an optimal substructure that we exploit to reduce the dimensionality of our optimization problem.
For example, consider the case where the optimal allocation function is known for a set of $M-1$ jobs, and we wish to additionally consider adding an $M$th job.
If we first decide how many servers to allocate to job $M$, the scale-free property tells us that stealing these servers from the other $M-1$ jobs in proportion to their existing allocations will produce the optimal policy.
Hence, knowing the optimal policy for a set of $M-1$ jobs reduces the problem of finding the optimal policy for a set of $M$ jobs to a single variable optimization problem.
Theorem \ref{thm:sf} is proven in Section \ref{sec:sf}.

Note that, while we state the scale-free property above for the optimal policy, the scale-free property actually holds for a more a general class of policies.  For any given completion order of the jobs, the policy which minimizes weighted flow time while completing jobs in the given order will obey the scale-free property.

%We can now exploit our knowledge of the optimal completion order and the scale-free property to find a closed form for the optimal policy.
%The first step in deriving the optimal allocation policy is to prove a size-invariant property which says that the optimal allocation to jobs $1, 2, \ldots, m(t)$ at time $t$ depends only on $m(t)$, not on the specific remaining sizes of the these jobs.
%This is a counter-intuitive result because one might imagine that the optimal allocation should be different if two very differently sized jobs are in the system instead of two equally sized jobs.
%The size-invariant property is stated in Theorem \ref{thm:indep}.
%\begin{restatable}[Size-invariant Property]{thm}{indep}
%    \label{thm:indep}
%    Consider any two sets of jobs, $A$ and $B$, each containing $m(t)$ jobs at time $t$.  
%    If $\bm{\theta}^*_A(t)$ and $\bm{\theta}^*_B(t)$ are the optimal allocations to the jobs in sets $A$ and $B$ at time $t$, respectively, we have that
%    $$\bm{\theta}^*_A(t) = \bm{\theta}^*_B(t).$$
%    That is, the allocation function depends only on the number of unfinished jobs in the system.
%\end{restatable}
%
%Theorem \ref{thm:indep} simplifies the computation of the optimal allocation policy, since it allows us to ignore the actual remaining sizes of the jobs in the system.  
%We need only to derive one optimal allocation for each possible value of $m(t)$.  We prove Theorem \ref{thm:indep} in Section \ref{sec:opt}.

We are now finally ready to state Theorem \ref{thm:opt}, which provides the allocation function for the optimal allocation policy which minimizes weighted flow time when weights favors small jobs.
\begin{restatable*}[Optimal Allocation Function]{thm}{opt}
    \label{thm:opt}
    At time $t$, when $m(t)$ jobs remain in the system,
    $$\theta^*_i(t) = \begin{cases} \left(\frac{z(i)}{z(m(t))}\right)^{\frac{1}{1-p}} - \left(\frac{z(i-1)}{z(m(t))}\right)^{\frac{1}{1-p}} & 1\leq i \leq m(t)\\
    \quad 0 & i > m(t)
    \end{cases}$$
    where
    $$z(k)=\sum_{i=1}^{k}w_i.$$
    We refer to the optimal allocation policy which uses $\bm{\theta}^*(t)$ as heSRPT.
\end{restatable*}
Theorem \ref{thm:opt} is proven in Section \ref{sec:opt}.
%Note that the optimal allocation to job $i$ is independent of its remaining size, in accordance with Theorem \ref{thm:indep}.
Given the optimal allocation function $\bm{\theta}^*(t)$, we can also explicitly compute the optimal weighted flow time for any set of $M$ jobs.
This is stated in Theorem \ref{thm:optt}.
\begin{restatable*}[Optimal Weighted Flow Time]{thm}{optt}
    \label{thm:optt}
    Given a set of $M$ jobs of size $x_1 \geq x_2 \geq \ldots \geq x_M$, the weighted flow time, $\Sl{*}$, under the optimal allocation policy $\bm{\theta}^*(t)$ is given by
    $$\Sl{*}=\frac{1}{s(N)}\sum_{k=1}^M  x_{k}\cdot\bigl[z(k)^{\frac{1}{1-p}} - z(k-1)^{\frac{1}{1-p}}\bigr]^{1-p}$$
    where
    $$z(k)=\sum_{i=1}^{k}w_i.$$
\end{restatable*}

Note that the optimal allocation policy biases its allocations towards shorter jobs, but does not give strict priority to these jobs in order to maintain the overall efficiency of the system.  That is,
$$0<\bm{\theta}^*_1(t) <\bm{\theta}^*_2(t) < \ldots < \bm{\theta}^*_{m(t)}(t).$$
This is illustrated in Figure \ref{fig:3jobs}, where the smallest remaining jobs always get the largest allocations under the optimal policy.
We refer to the optimal policy derived in Theorem \ref{thm:opt} as \emph{High Efficiency Shortest-Remaining-Processing-Time} or \emph{heSRPT}.

Section \ref{sec:gen} discusses how to apply Theorems \ref{thm:opt} and \ref{thm:optt} in order to optimize both the mean slowdown and mean flow time metrics.
These applications are summarized in Corollary \ref{thm:optflow}.
\begin{restatable*}[Optimal Mean Slowdown and Mean Flow Time]{cor}{optflow}
    \label{thm:optflow}
If we define
    $$w_i=\frac{1}{x_i/s(N)} \quad\mbox{and thus}\quad z(k)=\sum_{i=1}^{k}\frac{1}{x_i/s(N)},$$
    Theorem \ref{thm:opt} yields the optimal policy with respect to mean slowdown and Theorem \ref{thm:optt} yields the optimal total slowdown.

If we define
    $$w_i=1 \quad\mbox{and thus}\quad z(k)=k,$$
    Theorem \ref{thm:opt} yields the optimal policy with respect to mean flow time and Theorem \ref{thm:optt} yields the optimal total flow time.
\end{restatable*}

The remainder of Section \ref{sec:eval} is devoted to the development of Adaptive-heSRPT, an online version of heSRPT.

\section{Minimizing Weighted Flow Time}
\label{sec:slow}
\subsection{The Optimal Completion Order}\label{sec:prelim}
To determine the optimal completion order of jobs, we first show that the optimal allocation function remains constant between job departures.
This implies that the optimal allocation has $M$ decision points where new allocations must be determined.
\begin{restatable}{thm}{constant}
    \label{thm:constant}
    Consider any two times $t_1$ and $t_2$ where, WLOG, $t_1<t_2$.  Let $m^*(t)$ denote the number of jobs in the system at time $t$ under the optimal policy.  If $m^*(t_1)=m^*(t_2)$ then
    $$\bm{\theta}^*(t_1) = \bm{\theta}^*(t_2).$$
\end{restatable}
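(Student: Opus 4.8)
The plan is to argue by contradiction using an exchange argument built on two facts: that the speedup function $s(k)=k^p$ is \emph{strictly} concave, and that completing more work on a job can never increase the weighted flow time. First I would observe that, since all jobs are present at time $0$ and no jobs arrive, $m^*(t)$ is nonincreasing in $t$. Hence if $m^*(t_1)=m^*(t_2)$ with $t_1<t_2$, then $m^*(t)$ is constant on $[t_1,t_2]$, so no job departs on this interval and the set of active jobs $S$ is exactly the same at every $t\in[t_1,t_2]$. It therefore suffices to show that the optimal allocation $\bm{\theta}^*(t)$ is constant (almost everywhere) on any interval over which the active set is fixed.

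Let $\Delta=t_2-t_1>0$ and suppose, for contradiction, that $\bm{\theta}^*(\cdot)$ is not constant a.e.\ on $[t_1,t_2]$. I would compare the optimal policy $P^*$ against a modified policy $P'$ that agrees with $P^*$ on $[0,t_1)$, uses the \emph{time-averaged} allocation $\bar\theta_i=\frac{1}{\Delta}\int_{t_1}^{t_2}\theta^*_i(\tau)\,d\tau$ (constantly) for each $i\in S$ on $[t_1,t_2)$, and copies $\bm{\theta}^*$ exactly on $[t_2,\infty)$. The averaged allocation is feasible, since $\bar\theta_i\ge 0$ and $\sum_{i\in S}\bar\theta_i=\frac{1}{\Delta}\int_{t_1}^{t_2}\sum_{i\in S}\theta^*_i(\tau)\,d\tau\le 1$. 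The key efficiency observation is that, because $k\mapsto s(k)$ is strictly concave, Jensen's inequality gives for each job $i$ the work comparison
\[ W^{P'}_i(t_1,t_2)=\Delta\cdot s(\bar\theta_i N)\ \ge\ \int_{t_1}^{t_2} s(\theta^*_i(\tau)N)\,d\tau=W^{*}_i(t_1,t_2), \]
with \emph{strict} inequality for any job $i$ whose allocation $\theta^*_i(\cdot)$ varies on a positive-measure subset of $[t_1,t_2]$. By assumption at least one such job $j$ exists. Thus at time $t_2$, policy $P'$ has done weakly more work on every job and strictly more on job $j$, so the remaining sizes satisfy $x^{P'}_i(t_2)\le x^*_i(t_2)$ for all $i$ and $x^{P'}_j(t_2)<x^*_j(t_2)$.

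Finally I would convert this head start into a strictly smaller weighted flow time by a coupling/domination argument. On $[t_2,\infty)$ both policies use the same allocation function $\bm{\theta}^*$, so every still-active job receives exactly the same service rate under $P'$ as under $P^*$. Since $P'$ enters $[t_2,\infty)$ with weakly less remaining work on every job, each job's remaining work reaches $0$ no later under $P'$ than under $P^*$, i.e.\ $T^{P'}_i\le T^*_i$ for all $i$ (servers freed early by jobs that finish sooner under $P'$ can simply be left idle, preserving feasibility). For job $j$, the strictly smaller remaining size $x^{P'}_j(t_2)<x^*_j(t_2)$ together with the common positive service rate yields $T^{P'}_j<T^*_j$. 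Because every $w_i>0$, this gives $F^{P'}=\sum_i w_i T^{P'}_i<\sum_i w_i T^*_i=F^*$, contradicting the optimality of $P^*$. Hence $\bm{\theta}^*(\cdot)$ must be constant a.e.\ on $[t_1,t_2]$; taking the canonical (constant) representative yields $\bm{\theta}^*(t_1)=\bm{\theta}^*(t_2)$.

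I expect the main obstacle to be making the final domination step fully rigorous: one must track the cumulative work functions and argue that copying $\bm{\theta}^*$ after $t_2$ indeed completes every job no later, handling the bookkeeping when a job finishes earlier under $P'$ (so its allocation is momentarily wasted) without violating feasibility or accidentally slowing another job. The strict-concavity/Jensen step is the conceptual heart---it explains \emph{why} a constant allocation is most work-efficient---but it is routine; the care lies in coupling ``more work done'' to ``no later completion'' for every job simultaneously.
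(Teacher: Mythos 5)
Your proposal is correct and follows essentially the same route as the paper's proof in the appendix: replace the optimal allocation on $[t_1,t_2]$ by its time average, verify feasibility, invoke Jensen's inequality with the strict concavity of $s$ to show weakly more (and for some job strictly more) work is done by $t_2$, and conclude that no completion time increases while at least one strictly decreases, contradicting optimality. Your added remarks (monotonicity of $m^*(t)$ forcing no departures on the interval, and the coupling on $[t_2,\infty)$) only make explicit steps the paper leaves implicit.
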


\begin{proof}
    This proof is straightforward, and leverages the concavity of the speedup function, $s(k)$.  See \ref{sec:constant}.
\end{proof}
For the rest of the paper, we will therefore only consider allocation functions which change exclusively at departure times.

We can now show that the optimal policy will complete jobs in Shortest-Job-First order.

\sjf

\begin{proof}
Assume the contrary, that the optimal policy does not follow the SJF completion order.  This means there exist two jobs $\alpha$ and $\beta$ such that $x_\alpha < x_\beta$ but $T^*_\beta < T^*_\alpha$.  Note that any number of completions may occur between job $\alpha$ and job $\beta$.

If the allocation to job $\alpha$ had always been at least as large as the allocation to job $\beta$, it is easy to see that job $\alpha$ would have finished first.  Hence, we can identify some intervals of time where: (i) the allocations to job $\alpha$ and job $\beta$ are constant because there are no departures and (ii) the allocation to job $\beta$ is higher than the allocation to job $\alpha$.  Let $I$ be the smallest set of disjoint intervals such that for every $[t_1, t_2) = i \in I$
\begin{align}
\theta^*_\alpha(t) = \theta^*_\alpha(t') &\mbox{ and } \theta^*_\beta(t) = \theta^*_\beta(t') \qquad \forall t, t' \in [t_1, t_2)\\
\theta^*_\beta(t_1) &> \theta^*_\alpha(t_1).
\end{align}
Note also that on the interval $[T^*_\beta, T^*_\alpha)$, job $\alpha$ receives some positive allocation while job $\beta$ receives no allocation because it is complete.

On some subset of these intervals we will perform an interchange which will reduce the weighted flow time of the jobs, producing a contradiction.  Let $P$ be the policy resulting from this interchange.
Under $P$, we will fully exchange $\beta$ and $\alpha$'s allocations on the interval $[T^*_\beta, T^*_\alpha)$.  
That is, for any $t\in [T^*_\beta, T^*_\alpha)$, $\theta^P_\beta(t) = \theta^*_\alpha(t)$ and $\theta^P_\alpha(t)=0$.
To offset this interchange we will reallocate servers from $\beta$ to $\alpha$ on some subset of the intervals in $I$.
We will argue that, after this interchange, $T^P_\beta = T^*_\alpha$ and $T^P_\alpha < T^*_\beta$, leading to a reduction in weighted flow time.  

Let $W^P_i(t)$ be the total amount of work done by policy $P$ on job $i$ by time $t$.  Let $W^P_i(t_1,t_2)$ be the amount of work done by policy $P$ on job $i$ on the interval $[t_1,t_2)$.  That is, 
$$W^P_i(t_1,t_2) = W^P_i(t_2)-W^P_i(t_1).$$
Let $\gamma=W^*_\alpha(T^*_\beta,T^*_\alpha)$.  Since job $\alpha$ finished at exactly time $T^*_\alpha$, we know that $W^*_\alpha(T^*_\beta) = x_\alpha - \gamma$.
Similarly, we know that 
$$W^*_\beta(T^*_\beta) = x_\beta.$$
We can see that policy $P$ has the capacity to do up to $\gamma$ work on job $\beta$ on the interval $[T^*_\beta, T^*_\alpha)$. Specifically, if
$$W^P_\beta(T^*_\beta) = x_\beta - \gamma$$
then $T^P_\beta = T^*_\alpha$.

We will now reallocate servers from job $\beta$ to job $\alpha$ on some of the intervals in $I$ until $W^P_\beta(T^*_\beta) = x_\beta - \gamma$.  On every such interval $i=[t_1, t_2) \in I$ we will choose 
$$0 \leq \delta_i \leq  \theta^*_\beta(t_1) -  \theta^*_\alpha(t_1)$$ 
and let 
$$\theta^P_\beta(t_1) = \theta^*_\beta(t_1) - \delta_i \qquad \mbox{and} \qquad \theta^P_\alpha(t_1) = \theta^*_\alpha(t_1) + \delta_i.$$
Decreasing job $\beta$'s allocation on this interval has a well defined effect on $W^P_\beta(T^*_\beta)$.  Namely, decreasing an allocation of a $\theta^*_\beta(t_1)$ fraction of the servers by $\delta_i$ decreases  $W^P_\beta(T^*_\beta)$ by 
$$(t_2-t_1)(s(\theta^*_\beta(t_1) \cdot N)-s((\theta^*_\beta(t_1)-\delta_i)\cdot N)).$$
Note that, for any interval $i\in I$, $W^P_\beta(T^*_\beta)$ is a continuous, decreasing function of $\delta_i$.

We will now iteratively apply the following interchange algorithm.  Initialize $\delta_i = 0$, $\forall i\in I$.  For each interval $i \in I$, try setting $\delta_i$ to be $\theta^*_\beta(i) - \theta^*_\alpha(i)$.  If $W^P_\beta(T^*_\beta)$ is greater than $x_\beta - \gamma$ after this interchange, proceed to the next interval.  If $W^P_\beta(T^*_\beta) < x_\beta - \gamma$ using this value of $\delta_i$, there must exist some $0<\delta_i < \theta^*_\beta(i) - \theta^*_\alpha(i)$ such that $W^P_\beta(T^*_\beta) = x_\beta - \gamma$ by the intermediate value theorem.  Select this value of $\delta_i$ and terminate the algorithm.

To see this algorithm terminates, consider what would happen if 
$$\delta_i = \theta^*_\beta(i) - \theta^*_\alpha(i) \qquad \forall i \in I.$$
In this case, 
$$\theta^P_\beta(t) \leq \theta^*_\alpha(t) \qquad \forall t \leq T^*_\beta$$
which would imply that 
$$W^P_\beta(T^*_\beta) \leq W^*_\alpha(T^*_\beta) = x_\alpha - \gamma < x_\beta - \gamma.$$
Hence the algorithm will terminate before running out of intervals in $I$.  By construction, we know that $T^P_\beta=T^*_\alpha$.

We now show that, after performing the interchange, $T^P_\alpha < T^*_\beta$.  To see this, consider the total amount of work done on any interval in $i=[t_1,t_2)\in I$ before and after the interchange.  Because only the allocations to $\alpha$ and $\beta$ have changed, it is clear that
    \begin{align*}\sum_{j=1}^{m(t_1)} W^P_j(t_1,t_2)-W^*_j(t_1,t_2) = (W^P_\alpha(t_1,t_2) + W^P_\beta(t_1,t_2)) &- (W^*_\alpha(t_1,t_2) + W^*_\beta(t_1,t_2)).
\end{align*}
%Furthermore, we know that there exists some $\delta_i$ such that
%$$\theta^P_\beta = \theta^*_\beta - \delta_i$$
%and
%$$\theta^P_\alpha = \theta^*_\alpha + \delta_i.$$
By the strict concavity of $s$, we know that the second derivative of $s$ is negative.
    As illustrated in Figure \ref{fig:sjf}, and because $\theta^*_\beta(t_1) - \delta_i\geq \theta^*_\alpha(t_1)$ by construction, we know that
%    {\small
    \begin{align}
        s((\theta^*_\alpha(t_1) + \delta_i) N) - s(\theta^*_\alpha(t_1) N) > s(\theta^*_\beta(t_1)  N) - s((\theta^*_\beta(t_1) - \delta_i) N)&\label{eq:conc}.
    \end{align}
%    }
\begin{figure}[ht]
\centering
\includegraphics[width=.65\textwidth]{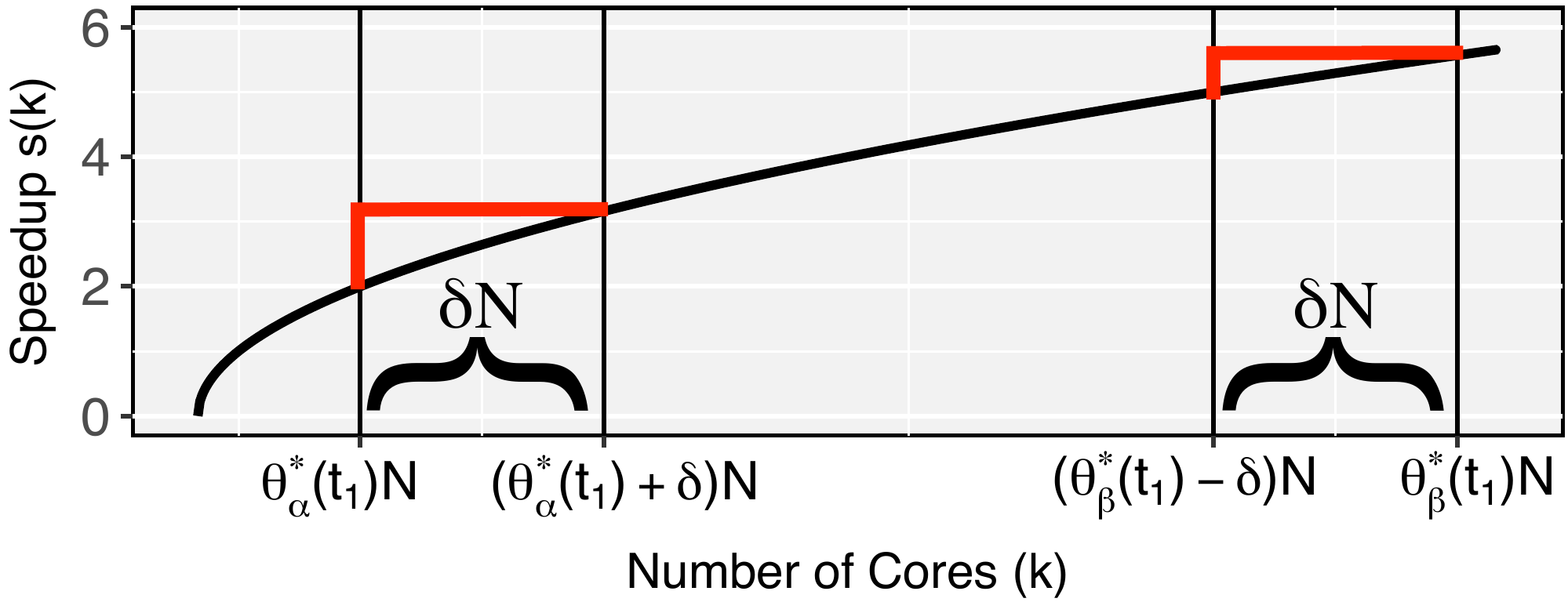}
    \caption{
        An illustration of \eqref{eq:conc}.  
        Because the speedup function, $s$, is concave, its second derivative is negative.  
        Hence, $s$ increases less on the interval $[(\theta^*_\beta(t_1) - \delta)N, \theta^*_\beta(t_1)N]$ than on the interval $[\theta^*_\alpha(t_1) N,(\theta^*_\alpha(t_1)+\delta)N]$. 
    }
\label{fig:sjf}
\end{figure}

Factoring out $N$ gives
\begin{align*}
s(\theta^*_\alpha(t_1) + \delta_i) - s(\theta^*_\alpha(t_1)) &> s(\theta^*_\beta(t_1)) - s(\theta^*_\beta(t_1) - \delta_i)
\end{align*}
and thus
%    {\small
\begin{align*}
    \bigl(s(\theta^*_\alpha(t_1) + \delta_i) +s(\theta^*_\beta(t_1) - \delta_i)\bigr) -
    \bigl(s(\theta^*_\alpha(t_1)) + s(\theta^*_\beta(t_1))\bigr) > 0.
\end{align*}
%    }
Multiplying both sides by $(t_2-t_1)$, we see that
%\begin{align*}
%    {\small
%    (t_2-t_1)[(s(\theta^*_\alpha(t_1) + \delta_i) +s(\theta^*_\beta(t_1) -\delta_i)) -
%    (s(\theta^*_\alpha(t_1)) + s(\theta^*_\beta(t_1)))] > 0
%    }
%\end{align*}
%which can be rewritten as
\begin{align*}
(W^P_\alpha(t_1,t_2) + W^P_\beta(t_1,t_2)) &- (W^*_\alpha(t_1,t_2) + W^*_\beta(t_1,t_2)) > 0
\end{align*}
since both allocations are constant on the interval $[t_1,t_2)$.
That is, the total amount of work done on interval $i$ has increased.  Since this holds for all $i\in I$ we have that 

%{\small
\begin{align}
    \sum_{[t_1,t_2) \in I} (W^P_\alpha(t_1,t_2) + W^P_\beta(t_1,t_2)) -
    (W^*_\alpha(t_1,t_2) + W^*_\beta(t_1,t_2)) &> 0\nonumber\\
    \sum_{[t_1,t_2) \in I} W^P_\alpha(t_1,t_2) - W^*_\alpha(t_1,t_2) - 
    \sum_{[t_1,t_2) \in I} W^*_\beta(t_1,t_2) -  W^P_\beta(t_1,t_2) &> 0.
\label{eq:ineq}
\end{align}
%}
Again, by construction, we have decreased the amount of work done on job $\beta$ during the intervals in $I$ by exactly $\gamma$.  That is,
\begin{align*}
%W^*_\beta(T^*_\beta) - W^P_\beta(T^*_\beta) &= x_\beta - (x_{\beta} - \gamma)\\
 \sum_{[t_1,t_2) \in I} W^*_\beta(t_1,t_2) -  W^P_\beta(t_1,t_2) &= \gamma
 \end{align*}
 and hence by \eqref{eq:ineq}
 $$\sum_{[t_1,t_2) \in I}W^P_\alpha(t_1,t_2) - W^*_\alpha(t_1,t_2) > \gamma.$$
 Recall that $W^*_\alpha(T^*_\beta) = x_\alpha - \gamma$.  Thus
 \begin{align*}
 W^P_\alpha(T^*_\beta) - W^*_\alpha(T^*_\beta) &= \sum_{[t_1,t_2) \in I}W^P_\alpha(t_1,t_2) - W^*_\alpha(t_1,t_2)\\
 W^P_\alpha(T^*_\beta) - (x_\alpha - \gamma) &> \gamma\\
 W^P_\alpha(T^*_\beta) &> x_\alpha.
 \end{align*}
This implies that, under policy $P$, job $\alpha$ finishes before time  $T^*_\beta$.

We can thus conclude that after the interchange, policy $P$ completes job $\beta$ at exactly $T^*_\alpha$ and policy $P$ completes job $\alpha$ at some time before $T^*_\beta$.
All other jobs are unchanged by this interchange, and hence their flow times remain the same.
Hence, it suffices to show that 
$$w_\alpha T^P_\alpha + w_\beta T^P_\beta < w_\alpha T^*_\alpha + w_\beta T^*_\beta.$$
Because weights favor small jobs we have that $w_\beta < w_\alpha$.
Furthermore, we have assumed that $T^*_\alpha > T^*_\beta$.
Hence,
$$w_\beta \left(T^*_\alpha - T^*_\beta\right) < w_\alpha \left(T^*_\alpha - T^*_\beta\right)$$
$$w_\beta T^*_\alpha + w_\alpha T^*_\beta < w_\alpha T^*_\alpha + w_\beta T^*_\beta,$$
and thus by construction
$$w_\beta T^P_\beta + w_\alpha T^P_\alpha < w_\beta T^*_\alpha + w_\alpha T^*_\beta < w_\alpha T^*_\alpha + w_\beta T^*_\beta.$$
This implies that the policy $P$ has a lower weighted flow time than the optimal policy, a contradiction.
\end{proof}
\begin{restatable}{definition}{sjfimpact}
    \label{rem:sjf}
    \noindent It will be useful to consider the rate at which weighted flow time is accrued in between departures.
    Because the optimal completion order is SJF, we know that job $k$ will complete directly after job $k+1$, and that during the interval $[T^*_{k+1}, T^*_{k})$, jobs 1 through $k$ will be present in the system.
    Hence, we define $z(k)$ to be the rate at which total flow time is accrued during the interval $[T^*_{k+1}, T^*_{k})$, where
    $$z(k) = \sum_{i=1}^k w_i.$$
\end{restatable}

\subsection{The Scale-Free Property}
\label{sec:sf}
The goal of this section is to characterize some optimal substructure of the optimal policy that will allow us to reduce the search space for the optimal policy.
Hence, we now prove an interesting property of the optimal policy which we call the scale-free property.
We will first need a preliminary lemma.
\begin{restatable}{lem}{scaling}
    \label{lem:scaling}
    Consider an allocation function $\bm{\theta}(t)$ which, on some time interval $[0,T]$ leaves $\beta$ fraction of the system unused.
    That is,
    $$\sum_{i=1}^{m(t)} \theta_i(t) = 1-\beta \qquad \forall t\in[0,T].$$
    The total work done on any job $i$ by time $T$ under $\bm{\theta}(t)$ is equivalent to the total work done on job $i$ by time $T$ under an allocation function $\bm{\theta}'(t)$ where
    $$\bm{\theta}'(t) = \frac{\bm{\theta}(t)}{1-\beta} \qquad \forall t\in[0,T]$$
    in a system that runs at $\left(1-\beta\right)^p$ times the speed of the original system (which runs at rate 1).
\end{restatable}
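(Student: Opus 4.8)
The plan is to reduce the statement to a pointwise comparison of \emph{instantaneous service rates} and then integrate. Since the remaining size of a job decreases at rate $s(\theta_i(t)\cdot N)$, the total work performed on job $i$ by time $T$ under $\bm{\theta}(t)$ is the time-integral of this rate,
$$W_i(T) = \int_0^T s\bigl(\theta_i(t)\cdot N\bigr)\,dt.$$
The corresponding quantity under $\bm{\theta}'(t)$ in the slowed system is $\int_0^T (1-\beta)^p\, s\bigl(\theta'_i(t)\cdot N\bigr)\,dt$, where the prefactor $(1-\beta)^p$ reflects the system running at $(1-\beta)^p$ times the original speed (cf.\ the footnote on replacing $s(k)$ by $s(k)\mu$). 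Hence it suffices to show that the two integrands agree at every time $t\in[0,T]$, from which equality of the integrals — and therefore of the total work — is immediate.

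The key step is a one-line computation that exploits the power-law form $s(k)=k^p$. Substituting $\theta'_i(t)=\theta_i(t)/(1-\beta)$, the modified instantaneous rate is
$$(1-\beta)^p\, s\!\left(\frac{\theta_i(t)}{1-\beta}\cdot N\right) = (1-\beta)^p \left(\frac{\theta_i(t)\cdot N}{1-\beta}\right)^p = (1-\beta)^p\cdot\frac{(\theta_i(t)\cdot N)^p}{(1-\beta)^p} = s\bigl(\theta_i(t)\cdot N\bigr).$$
Thus the two integrands coincide pointwise and $W_i(T)$ is identical under the two setups. I would also remark in passing that $\bm{\theta}'(t)$ is a feasible allocation, since $\sum_{i=1}^{m(t)}\theta'_i(t) = \frac{1}{1-\beta}\sum_{i=1}^{m(t)}\theta_i(t) = 1$, so the rescaling exactly saturates the system rather than overcommitting it.

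The main (and essentially the only) subtlety is recognizing that the equivalence is a genuine feature of the \emph{specific} functional form of the speedup: it is precisely because $s$ is homogeneous of degree $p$ that inflating every allocation by the factor $1/(1-\beta)$ is exactly offset by slowing the system by $(1-\beta)^p$. For a general concave $s$ no such clean cancellation occurs, so the argument does not extend beyond the family $s(k)=k^p$. No estimate, interchange, or limiting argument is required; the entire content of the lemma is this exact cancellation of the $(1-\beta)^p$ factors.
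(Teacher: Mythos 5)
Your proof is correct and follows essentially the same route as the paper's: both arguments reduce the claim to a pointwise identity of instantaneous service rates, using the multiplicativity of $s(k)=k^p$ (i.e., $s(1-\beta)\cdot s(\theta_i(t)N/(1-\beta)) = s(\theta_i(t)N)$) so that the $(1-\beta)^p$ slowdown exactly cancels the inflation of the allocations. Your added remarks on feasibility of $\bm{\theta}'(t)$ and on why the argument is specific to the power-law family are correct but not needed for the lemma itself.
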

\begin{proof}
    Is straightforward, see \ref{sec:morescaling}.
\end{proof}
\noindent 
Using Lemma \ref{lem:scaling} we can characterize the optimal policy.  Theorem \ref{thm:sf} states that a job's allocation relative to the jobs larger than it will remain constant for the job's entire lifetime.
\sf
\begin{proof}
    We will prove this statement by induction on the overall number of jobs, $M$. 
    First, note that the statement is trivially true when $M=1$. 
    It remains to show that if the theorem holds for $M=k$, then it also holds for $M=k+1$.  
    
    Let $M=k+1$ and let $T^*_{i}$ denote the finishing time of job $i$ under the optimal policy.
    Recall that the optimal policy finishes jobs according to the SJF completion order, so $T^*_{{i+1}} \leq T^*_{i}$.  Consider a system which optimally processes $k$ jobs, which WLOG are jobs $1, 2, \ldots, {M-1}$.  We will now ask this system to process an additional job, job ${M}$.
    From the perspective of the original $k$ jobs, there will be some constant portion of the system, $\theta^*_{M}$, used to process job $M$ on the time interval $[0,T^*_{M}]$.  The remaining $1-\theta^*_{M}$ fraction of the system will be available during this time period.  Just after time $T^*_{M}$, there will be at most $k$ jobs in the system, and hence by the inductive hypothesis the optimal policy will obey the scale-free property on the interval $(T^*_{{M}},T^*_{{1}}]$.
    
    Consider the problem of minimizing the weighted flow time of the $M$ jobs \emph{given any fixed value} of $\theta^*_{M}$ such that the SJF completion order is obeyed.  We can write the optimal weighted flow time of the $M$ jobs, $\Sl{*}$, as
    $$\Sl{*}=w_M T^*_M + \sum_{j=1}^{M-1} w_j T^*_j$$
    where $w_M T^*_M$ is a constant.  Clearly, optimizing weighted flow time in this case is equivalent to optimizing the weighted flow time for $M-1=k$ jobs with the added constraint that $\theta^*_{M}$ is unavailable (and hence ``unused'' from the perspective of jobs ${M-1}$ through ${1}$) during the interval $[0, T^*_{M}]$.  By Lemma \ref{lem:scaling}, this is equivalent to having a system that runs at a fraction $\left(1-\theta^*_{M}\right)^p$ of the speed of a normal system during the interval $[0, T^*_{M}]$.

    Thus, for some $d>1$, we will consider the problem of optimizing weighted flow time for a set of $k$ jobs in a system that runs at a speed $\frac{1}{d}$ times as fast during the interval $[0,T^*_{M}]$.

    Let $\Sl{*}[x_{{M-1}}, \ldots, x_{{1}}]$ be the optimal weighted flow time of $k$ jobs of size $x_{{M-1}} \ldots x_{1}$.  Let $\Sl{s}[x_{{M-1}}, \ldots, x_{{1}}]$ be the weighted flow time of these jobs in a \emph{slow system} which always runs $\frac{1}{d}$ times as fast as a normal system.

If we let $T^s_i$ be the finishing time of job $i$ in the slow system, it is easy to see that 
$$T^*_i = \frac{T^s_i}{d}$$
since we can just factor out a $d$ from the expression for the completion time of every job in the slow system.  
Hence, we see that
    $$\Sl{*}[x_{{M-1}}, \ldots, x_{{1}}] = \frac{\Sl{s}[x_{{M-1}}, \ldots, x_{{1}}]}{d}$$
by the same reasoning.  Clearly, then, the optimal allocation function in the slow system, $\bm{\theta}^s(t)$, is equal to $\bm{\theta}^*(t)$ at the respective departure times of each job.  That is,
$$\bm{\theta}^*(T^*_j) = \bm{\theta}^s(T^s_j) \qquad \forall 1 \leq j \leq M-1.$$

We will now consider a \emph{mixed} system which is ``slow'' for some interval $[0,T^*_{M}]$ that ends before $T^s_{{M-1}}$, and then runs at normal speed after time $T^*_{M}$.
    Let $\Sl{Z}[x_{{M-1}}, \ldots, x_{{1}}]$ denote the weighted flow time in this mixed system and let $\bm{\theta}^Z(t)$ denote the optimal allocation function in the mixed system.
We can write
\begin{align*}
    \Sl{Z}[x_{{M-1}}, \ldots, x_{{1}}] &= z(k)\cdot T^*_{M}\\
    &+ \Sl{*}[x_{{M-1}}-\frac{s(\theta_{{M-1}}^Z)T^*_{M}}{d}, \ldots x_{{1}} - \frac{s(\theta_{{1}}^Z)T^*_{M}}{d}].
\end{align*}
Similarly we can write

    \begin{align*}\Sl{s}[x_{{M-1}}, \ldots, x_{{1}}] &= z(k)\cdot T^*_{M}\\
        &+ \Sl{s}[x_{{M-1}}-\frac{s(\theta_{{M-1}}^s)T^*_{M}}{d}, \ldots x_{{1}} - \frac{s(\theta_{{1}}^s)T^*_{M}}{d}].
    \end{align*}

    Let $T^Z_{i}$ be the finishing time of job $i$ in the mixed system under $\bm{\theta}^Z(t)$.
    Since $z(k)\cdot T^*_{M}$ is a constant not dependent on the allocation function, we can see that the optimal allocation function in the mixed system will make the same allocation decisions as the optimal allocation function in the slow system at the corresponding departure times in each system.  That is,
    $$\bm{\theta}^*(T^*_j) = \bm{\theta}^s(T^s_j) = \bm{\theta}^Z(T^Z_j) \qquad \forall 1 \leq j \leq M-1.$$
    
    By the inductive hypothesis, the optimal allocation function in the slow system obeys the scale-free property. Hence, $\bm{\theta}^Z(t)$ also obeys the scale-free property for this set of $k$ jobs given any fixed value of $\theta^*_{M}$.

We now apply Lemma \ref{lem:scaling} again, multiplying $\bm{\theta}^Z(t)$ by $1-\theta^*_{M}$ on the interval $[0,T^*_{M}]$ to recover an allocation policy with $\theta^*_{M}$ unused servers on $[0,T^*_{M}]$.
We call the resulting policy $\bm{\theta}^P(t)$.
    By Lemma \ref{lem:scaling}, jobs $M-1, \ldots, 1$ have the same residual sizes at time $T^*_M$ in a mixed system under $\bm{\theta}^Z(t)$ as they do in a constant speed system under $\bm{\theta}^P(t)$.
Hence, because the mixed system under $\bm{\theta}^Z(t)$ and the constant speed system under $\bm{\theta}^P(t)$ are identical after time $T^*_M$, the weighted flow time in these two systems is the same.
Therefore, if $\bm{\theta}^Z(t)$ is optimal in the mixed system, $\bm{\theta}^P(t)$ minimizes the weighted flow time of jobs $M-1, \ldots, 1$ for any fixed value of $\theta^*_{M}$ in a normal speed system.
Since $\bm{\theta}^Z(t)$ obeys the scale-free property, $\bm{\theta}^P(t)$ obeys the scale-free property for jobs $1,\ldots,M-1$.
Hence, for a set of $M=k+1$ jobs, given any allocation $\theta^*_{M}$ to job $M$ on the interval $[0,T^*_{M}]$, the optimal allocations to the remaining $M-1$ jobs obey the scale-free property.
    Finally, the scale-free property is trivially satisfied for job $M$ by allocating any fixed $\theta^*_{M}$ to job $M$ on the interval $[0,T^*_{M}]$ and setting $c^*_i=\theta^*_{M}$.
The optimal allocation function for processing the $M=k+1$ jobs therefore obeys the scale-free property.
This completes the proof by induction.
\end{proof}
\begin{restatable}{definition}{sfimpact}
    \label{rem:sf}
    \noindent The scale-free property tells us that, under the optimal policy, job $i$'s allocation relative to the jobs completed after it is a constant, $c^*_i$.  Note that the jobs completed after job $i$ are precisely the jobs with an initial size of at least $x_i$, since the optimal policy follows the SJF completion order.  It will be useful to define an {\bf optimal scale-free constant}, $\omega^*_i$, for every job $i$, where for any $t<T^*_{i}$
    $$\omega^*_{i} = \frac{1}{c^*_i} - 1 = \frac{\sum_{j=1}^{i-1} \theta^*_{j}(t)}{\theta^*_{i}(t)}.$$
    Note that we define $\omega^*_1 =0$.  Let $\bm{\omega}^*=(\omega^*_1, \omega^*_2, \ldots, \omega^*_M)$ denote the optimal scale-free constants corresponding to each job.
\end{restatable}

\subsection{Finding the Optimal Allocation Function} 
\label{sec:opt}
We will now make use of the scale-free property and our knowledge of the optimal completion order to find the optimal allocation function.
We will consider the weighted flow time under any policy, $P$, which obeys the scale-free property and follows the SJF completion order.
In Lemma \ref{lem:transform}, we derive an expression for the weighted flow time under the policy $P$ as a function of the \emph{scale-free constants} $\bm{\omega}^P$.
In Theorem \ref{thm:opt-omega} we then minimize this expression to find the optimal scale-free constants and the optimal allocation function.

\begin{restatable}{lem}{transform}
    \label{lem:transform}
Consider a policy $P$ which obeys the scale-free property and which completes jobs in shortest-job-first order.  We define
    $$\omega^P_{i} = \frac{\sum_{j=1}^{i-1} \theta^P_{j}(t)}{\theta^P_{i}(t)} \qquad \forall 1<i\leq M,\ 0\leq t<T^P_{i}$$
    and $\omega^P_1 = 0$. We can then write the weighted flow time under policy $P$ as a function of $\bm{\omega}^P = (\omega^P_1, \omega^P_2, \ldots, \omega^P_M) $ as follows
\begin{eqnarray*}
    \Sl{P}(\bm{\omega}^P)
    &=& \frac{1}{s(N)}\sum_{k=1}^M  x_{k} \cdot \bigl[z(k)  s(1 + \omega^P_{k})  - z(k-1) s(\omega^P_{k})\bigr]\\
\end{eqnarray*}
\end{restatable}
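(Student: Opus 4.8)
The plan is to reduce the weighted flow time to a sum over the inter-departure intervals and then to solve a triangular system of work equations for the lengths of those intervals. Since $P$ follows the SJF order, exactly the jobs $1,\dots,k$ are present on the interval $[T^P_{k+1},T^P_k)$ (with $T^P_{M+1}:=0$). Writing $\Delta_k = T^P_k - T^P_{k+1}$ for its length and recalling from Remark~\ref{rem:sjf} that weighted flow time accrues at rate $z(k)$ while jobs $1,\dots,k$ remain, I would first record $\Sl{P} = \sum_{k=1}^M z(k)\,\Delta_k$ (equivalently, $\sum_k w_k T^P_k = \sum_k w_k\sum_{l\ge k}\Delta_l = \sum_l z(l)\Delta_l$). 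It then remains to express each $\Delta_k$ in terms of the scale-free constants $\bm\omega^P$.

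Next I would pin down the allocations. Taking $P$ to fully utilize the servers while work remains, $\sum_{j=1}^{m(t)}\theta^P_j(t)=1$, which is without loss of generality for weighted flow time since idling only delays completions, the scale-free property together with full utilization forces the allocations to be constant between departures and to depend only on $\bm\omega^P$. Concretely, let $\theta_k^{(l)}$ denote the allocation job $k$ receives while $l\ge k$ jobs remain and set $S_i^{(l)}=\sum_{j\le i}\theta_j^{(l)}$. From $\omega^P_i = S_{i-1}^{(l)}/\theta_i^{(l)}$ and $S_l^{(l)}=1$ one solves the recursion $S_{i-1}^{(l)} = \tfrac{\omega^P_i}{1+\omega^P_i}S_i^{(l)}$ to obtain $\theta_k^{(l)} = \tfrac{1}{1+\omega^P_k}\prod_{j=k+1}^l \tfrac{\omega^P_j}{1+\omega^P_j}$, which indeed depends only on $l$ and $\bm\omega^P$.

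The crux is then the triangular work system. Job $k$ is worked on precisely while $l=k,k+1,\dots,M$ jobs remain, so conservation of its work gives $x_k=\sum_{l=k}^M s(\theta_k^{(l)}N)\,\Delta_l$. Using $s(k)=k^p$, I would substitute the allocation formula and divide through to reach \[ x_k\, s(1+\omega^P_k)=\sum_{l=k}^M\Bigl(\prod_{j=k+1}^l r_j\Bigr)s(N)\Delta_l,\qquad r_j:=\bigl(\tfrac{\omega^P_j}{1+\omega^P_j}\bigr)^p. \] The right-hand side telescopes: its tail equals $r_{k+1}$ times the entire right-hand side of the $(k+1)$-st equation, so subtracting $r_{k+1}$ times that equation isolates the single term $s(N)\Delta_k$ and yields $s(N)\Delta_k = x_k\, s(1+\omega^P_k) - x_{k+1}\,s(\omega^P_{k+1})$ (with $x_{M+1}:=0$), where I use the power-law identity $r_{k+1}\,s(1+\omega^P_{k+1})=s(\omega^P_{k+1})$. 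This telescoping is the step I expect to be the main obstacle, and it is exactly where the form $s(k)=k^p$ is essential: for a general concave $s$ the nested products would not collapse.

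Finally I would substitute this closed form for $\Delta_k$ into $\Sl{P}=\sum_k z(k)\Delta_k$, split the sum into its $s(1+\omega^P_k)$ and $s(\omega^P_{k+1})$ pieces, and reindex the second piece by $k\mapsto k+1$. The boundary terms vanish because $\omega^P_1=0$ and $s(1)=1$ absorb the first piece's $k=1$ term while $z(0)=0$ makes the second piece's $k=1$ term free to include, leaving exactly $\Sl{P} = \tfrac{1}{s(N)}\sum_{k=1}^M x_k\bigl[z(k)\,s(1+\omega^P_k)-z(k-1)\,s(\omega^P_k)\bigr]$, as claimed.
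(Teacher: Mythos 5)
Your proof is correct, but it takes a genuinely different route from the paper's. The paper introduces an auxiliary policy $P'$ that freezes the time-$0$ allocations forever, so that $T^{P'}_k = x_k/s(\theta^{P'}_kN)$ is immediate, and then invokes Lemma \ref{lem:scaling} (the slow-system equivalence) to convert each inter-departure gap of $P'$ into the corresponding gap of $P$ via the factor $s(\theta^{P'}_1+\cdots+\theta^{P'}_k)$; multiplicativity of $s$ then produces the same per-interval identity $s(N)\Delta_k = x_k\,s(1+\omega^P_k)-x_{k+1}\,s(\omega^P_{k+1})$ that you obtain. You instead solve the triangular work-conservation system $x_k=\sum_{l\ge k}s(\theta^{(l)}_kN)\Delta_l$ directly by telescoping, after first deriving the explicit allocations $\theta^{(l)}_k$ from $\bm\omega^P$ and full utilization. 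Your route is more self-contained (no auxiliary system, no appeal to Lemma \ref{lem:scaling}), it makes explicit where $s(k)=k^p$ is essential (the collapse $r_{k+1}\,s(1+\omega^P_{k+1})=s(\omega^P_{k+1})$), and it effectively pre-derives the linear system that the paper only solves later in Theorem \ref{thm:opt}; the paper's route buys reuse of machinery already built for Theorem \ref{thm:sf} and defers any explicit computation of the allocations. One remark: you are right that full utilization must be assumed for $\Sl{P}$ to be a function of $\bm\omega^P$ alone --- the paper leaves this implicit, and your observation that it is without loss for the optimization (and that it forces piecewise-constant allocations) is a welcome extra degree of care rather than a gap.
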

\begin{proof}
To analyze the total weighted flow time under $P$ we will relate $P$ to a simpler policy, $P'$, which is much easier to analyze.
We define $P'$ to be
    $$\theta^{P'}_i = \theta^{P'}_i(t) = \theta^P_i(0) \qquad \forall 1 \leq i \leq M.$$
Importantly, each job receives some initial optimal allocation at time 0 which does not change over time under $P'$.
Since allocations under $P'$ are constant we have that
\begin{eqnarray*}
T^{P'}_k  &=& \frac{x_k}{s(\theta^{P'}_k)} .
\end{eqnarray*}
We can now derive equations that relate $T^P_k$ to $T^{P'}_k$.

By Theorem \ref{thm:sf}, during the interval $[T^{P'}_{{k+1}}, T^{P'}_{{k}}]$,  
$$\omega^P_i = \omega^{P'}_i \qquad \forall 1 \leq i \leq k.$$
Note that a fraction of the system $\sum_{i=k+1}^M \theta^{P'}_{i}$ is unused during this interval, and hence by Lemma \ref{lem:scaling}, we have that
\begin{eqnarray*}
T^{P'}_{k} - T^{P'}_{{k+1}} &=& \frac{T^P_{k} - T^P_{{k+1}}}{s(\theta^{P'}_{1} + \cdots + \theta^{P'}_{k})}. \\
\end{eqnarray*}
    Let $\alpha_k = \frac{1}{s(\theta^{P'}_{1} + \cdots + \theta^{P'}_{k})}$ be the scaling factor during this interval.

    If we define $x_{{M+1}} = 0$ and $T^P_{{M+1}}=0$, we can express the weighted flow time under policy $P$, $\Sl{P}$, as
\begin{align*}
    \Sl{P} &= z(M) T^P_{M} + z(M-1) (T^P_{{M-1}} - T^P_{M}) + \cdots +z(2) (T^P_{{2}} - T^P_{{3}})  + z(1)(T^P_{1} - T^P_{{2}})\\
    &= \sum_{k=1}^M  z(k) (T^P_{k} - T^P_{{k+1}})  \\
    &= \sum_{k=1}^M  z(k) \frac{T^{P'}_{k} - T^{P'}_{{k+1}}}{\alpha_k} .
\end{align*}
    We can now further expand this expression in terms of the job sizes, using the fact that $s(ab)=s(a)\cdot s(b)$, as follows:
\begin{align}
    \Sl{P} =& \sum_{k=1}^M  z(k) \frac{\frac{x_{k}}{s(\theta^{P'}_{{k}}N)} - \frac{x_{{k+1}}}{s(\theta^{P'}_{{k+1}}N)}}{\alpha_k} \nonumber\\
    =& \frac{1}{s(N)}\sum_{k=1}^M  z(k)\cdot \bigl[x_{k} s(1+\omega^{P'}_k) -x_{{k+1}} s(\omega^{P'}_{k+1})\bigr]  \nonumber \\
    =&\frac{1}{s(N)}\sum_{k=1}^M  x_{k}\cdot \bigl[z(k)  s(1 + \omega^P_{k})  - z(k-1) s(\omega^P_{k})\bigr] \label{eq:slowdown}
\end{align}
as desired.
\end{proof}

We now have an expression for the weighted flow time of any policy $P$ which obeys the scale-free property and completes jobs in SJF order.
Since the optimal policy obeys these properties, the choice of $P$ which minimizes the above expression for weighted flow time must be the optimal policy.
In Theorem \ref{thm:opt-omega} we find a closed form expression for the optimal scale-free constants.

\begin{restatable}[Optimal Scale-Free Constants]{thm}{optomega}
    \label{thm:opt-omega}
    The optimal scale-free constants are given by the expression
    $$\omega_{k}^* = \frac{1}{\left(\frac{z(k)}{z(k-1)}\right)^{\frac{1}{1-p}} - 1} \qquad \forall 1 < k \leq M.$$
   
\end{restatable}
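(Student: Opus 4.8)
The plan is to minimize directly the expression for weighted flow time supplied by Lemma~\ref{lem:transform}. Substituting $s(k)=k^p$, the quantity to minimize is
$$\Sl{P}(\bm{\omega}) = \frac{1}{s(N)}\sum_{k=1}^M x_k\bigl[z(k)(1+\omega_k)^p - z(k-1)\,\omega_k^p\bigr].$$
The crucial observation is that this objective is \emph{separable}: the variable $\omega_k$ occurs only in the $k$-th summand. Moreover, as the reconstruction of an allocation from $\bm\omega$ underlying Lemma~\ref{lem:transform} shows, the components $\omega_2,\dots,\omega_M$ may be chosen independently over $[0,\infty)$ (with $\omega_1=0$ fixed by definition), so the minimization is genuinely unconstrained. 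Consequently, minimizing $\Sl{P}$ decouples into minimizing, for each $1<k\le M$, the single-variable function
$$g_k(\omega) = z(k)(1+\omega)^p - z(k-1)\,\omega^p \qquad (\omega\ge 0),$$
while the $k=1$ term is constant, consistent with $\omega_1^*=0$.

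Next I would carry out the elementary calculus. Differentiating and setting $g_k'(\omega)=p\bigl[z(k)(1+\omega)^{p-1}-z(k-1)\omega^{p-1}\bigr]=0$ gives $z(k)(1+\omega)^{p-1}=z(k-1)\omega^{p-1}$, equivalently
$$\left(\frac{1+\omega}{\omega}\right)^{1-p} = \frac{z(k)}{z(k-1)}.$$
Solving for $\omega$ yields $1/\omega = \bigl(z(k)/z(k-1)\bigr)^{1/(1-p)} - 1$, which is exactly the claimed formula for $\omega_k^*$. Feasibility is then immediate: since weights favor small jobs, $z$ is strictly increasing, so $z(k)/z(k-1)>1$; raising to the positive power $1/(1-p)$ keeps the quantity above $1$, the denominator is positive, and hence $\omega_k^*>0$.

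Finally I would confirm that this stationary point is the global minimizer of $g_k$ on $(0,\infty)$, rather than a boundary or spurious critical point. As $\omega\to 0^+$ the term $-z(k-1)\omega^{p-1}$ forces $g_k'\to-\infty$, so $g_k$ is initially decreasing below its limiting value $z(k)$; as $\omega\to\infty$ we have $g_k(\omega)\sim w_k\,\omega^p\to\infty$ because $z(k)-z(k-1)=w_k>0$. Since the left-hand side $(1+1/\omega)^{1-p}$ of the stationarity equation is strictly decreasing from $+\infty$ to $1$, that equation has a \emph{unique} root, so the single interior critical point is the global minimum. The main obstacle here is conceptual rather than computational: the payoff is recognizing that the change of variables of Lemma~\ref{lem:transform} collapses the $M$-dimensional, completion-order-constrained problem into $M$ independent one-dimensional minimizations. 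I would take some care over the monotonicity/uniqueness argument, since $g_k$ is neither globally convex nor concave for $0<p<1$, so one cannot simply invoke convexity and must instead argue from the sign behavior of $g_k'$ and the strict monotonicity of $(1+1/\omega)^{1-p}$.
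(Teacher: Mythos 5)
Your calculus is correct, and your uniqueness and global-minimum analysis of each $g_k$ is in fact more careful than what the paper records (the paper dismisses the second-order conditions as ``satisfied trivially''). The genuine gap is the sentence claiming the minimization is ``genuinely unconstrained.'' The expression $\Sl{P}(\bm{\omega})$ from Lemma~\ref{lem:transform} equals the weighted flow time of the induced policy \emph{only when that policy actually completes jobs in SJF order}: its derivation uses $T^P_{k}\geq T^P_{k+1}$ and the fact that exactly jobs $1,\dots,k$ are present on $[T^P_{k+1},T^P_{k})$. For an arbitrary $\bm{\omega}\in[0,\infty)^{M-1}$ the reconstruction does produce a valid allocation vector, but if some $\omega_k$ is large enough the induced policy gives job $k$ so small an allocation that it completes after job $k-1$, in which case $\Sl{P}(\bm{\omega})$ is just a number with no operational meaning --- exactly the ``the value of the expression is meaningless'' issue the paper flags in its introduction. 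So the honest feasible set is the set of $\bm{\omega}$ whose induced policy respects the SJF completion order, and after computing the unconstrained stationary point you must still verify that it lies in this set; otherwise the minimizer of the formula need not be the optimal policy.

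The paper closes this gap with Lemma~\ref{lem:increase}: substituting the stationary $\omega^*_k$ into the reconstruction gives $\theta^*_i(t)=\left(z(i)/z(m(t))\right)^{\frac{1}{1-p}}-\left(z(i-1)/z(m(t))\right)^{\frac{1}{1-p}}$, and since $x\mapsto x^{\frac{1}{1-p}}$ is convex and $w_{i+1}\geq w_i$, these allocations are strictly increasing in $i$. Smaller jobs therefore always hold larger allocations, so they must finish first and the stationary point is feasible; the expression evaluated there is then the true weighted flow time of the induced policy, and since the true optimum corresponds to some feasible $\bm{\omega}$ by Theorems~\ref{thm:sjf} and~\ref{thm:sf}, the stationary point is indeed optimal. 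Adding this verification (or an equivalent argument that the critical point induces increasing allocations) completes your proof; without it the final step does not follow.
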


\begin{proof}
    Consider a policy $P$ which obeys the scale-free property and completes jobs in SJF order.
    Let $\Sl{P}(\bm{\omega}^P)$ be the expression for weighted flow time for $P$ from Lemma \ref{lem:transform}.
    Our goal is to find a closed form expression for $\bm{\omega}^*$, the scale-free constants which minimize the above expression for weighted flow time.

A sufficient condition for finding $\bm{\omega}^*$ is that a policy completes jobs in SJF order \emph{and} satisfies the following first-order conditions:
\begin{eqnarray*}
    \frac{\partial \Sl{P}}{\partial \omega^P_{k}} = 0 \qquad \forall 1 \leq k \leq M
\end{eqnarray*} 

The second-order conditions are satisfied trivially.
Note that solely satisfying the first order conditions is insufficient, because the resulting solution is not guaranteed to respect the SJF completion order.
Luckily, we can show that any solution which satisfies the first-order conditions also completes jobs in SJF order.
To see this, we will begin by finding the allocation function, $\bm{\Theta}^{P}(t)$, which satisfies the first-order conditions.
%We will see that under $\bm{\Theta}^{P}(t)$, jobs are completed in SJF order.

    The first order conditions, obtained by differentiating \eqref{eq:slowdown}, give
    $$z(k)s'(1+\omega^P_{k}) - z(k-1) s'(\omega^P_{k}) = 0 \qquad \forall 1 \leq k \leq M$$
and hence
    $$\omega_{k}^P = \frac{1}{\left(\frac{z(k)}{z(k-1)}\right)^{\frac{1}{1-p}} - 1} \qquad \forall 1 < k \leq M$$

We can show that the values of $\Theta^{P}_{k}(t)$ are increasing in $k$ (see \ref{sec:addl}).  
That is, smaller jobs always have larger allocations than larger jobs under $\bm{\Theta}^{P}(t)$.
This implies that $\bm{\Theta}^{P}(t)$ follows the SJF completion order, since a larger job cannot complete before a smaller job unless it receives a larger allocation for some period of time.
$\bm{\Theta}^{P}(t)$ therefore satisfies the sufficient condition for optimality.
We thus have found a closed form expression for the optimal scale free constants.
That is,
    $$\omega_{k}^* = \frac{1}{\left(\frac{z(k)}{z(k-1)}\right)^{\frac{1}{1-p}} - 1} \qquad \forall 1 < k \leq M$$
as desired.
\end{proof}

%This computation begins with the interesting observation that the optimal allocation function does not directly depend on the sizes of the $M$ jobs.  This is stated in Theorem \ref{thm:indep}.
%\indep*
%\begin{proof}
%Recall from the proof of Theorem \ref{thm:opt-omega} that the optimal allocation function satisfies the following first order conditions
%    $$z(k)s'(1+\omega^P_{k}) - z(k-1) s'(\omega^P_{k}) = 0 \qquad \forall 1 \leq k \leq M$$
%and always completes jobs in SJF order.
%    Note that these conditions do not explicitly depend on any job sizes.  Hence, while the value of the optimal allocation function may depend on \emph{how many} jobs are in the system, given any two sets of jobs, $A$ and $B$, which consist of $m(t)$ jobs at time $t$, the optimal allocation function for set $A$ will be equal to the optimal allocation function for set $B$ at time $t$.
%\end{proof}
\noindent We now derive the optimal allocation function in Theorem \ref{thm:opt}.

\opt
\begin{proof}
    We can now solve a system of equations to derive the optimal allocation function.
    Consider a time, $t$, when there are $m(t)$ jobs in the system.
    Since the optimal completion order is SJF, we know that the jobs in the system are specifically jobs $1,2, \ldots, m(t)$.
    We know that the allocation to jobs $m(t) + 1, \ldots, M$ is 0, since these jobs have been completed.  Hence, we have that
    $$\theta^*_1 + \theta^*_2 + \ldots + \theta^*_{m(t)} = 1.$$
    Furthermore we have $m(t) - 1$ constraints provided by the expressions for $\omega^*_2, \omega^*_3, \ldots, \omega^*_{m(t)}$.
\begin{eqnarray*}
    \omega^*_2 =  \frac{\theta^*_1(t)}{\theta^*_2(t)} ,  \omega^*_3 = \frac{\theta^*_2(t) + \theta^*_1(t)}{\theta^*_3(t)},
    \cdots,\omega^*_{m(t)} = \frac{\sum_{i=1}^{m(t)-1}\theta^*_{i}(t)}{\theta^*_{m(t)}(t)}.
\end{eqnarray*}
These can be written as
\begin{eqnarray*}
    \theta^*_1(t) &=& \omega^*_2 \theta^*_2(t) \\
    \theta^*_1(t) + \theta^*_2(t) &=& \omega^*_3 \theta^*_3(t) \\
\cdots\\
    \theta^*_1(t) + \theta^*_2(t) + \cdots + \theta^*_{m(t)-1}(t)  &=& \omega^*_{m(t)} \theta_{m(t)} \\
    \theta^*_1 + \theta^*_2 + \ldots + \theta^*_{m(t)} &=& 1\\
\end{eqnarray*}

and then rearranged as
\begin{eqnarray*}
    \theta^*_{m(t)}(t) &=& \frac{1}{1 + \omega^*_{m(t)}} \\
    \theta^*_{m(t)-1}(t) &=& \frac{\theta^*_{m(t)}(t) \omega^*_{m(t)}}{1 + \omega^*_{m(t)-1}} = \frac {\omega^*_{m(t)}}{(1+\omega^*_{m(t)-1})(1+\omega^*_{m(t)})} \\
\cdots \\
    \theta^*_2(t) &=& \frac{\theta^*_3(t) \omega^*_3}{1 + \omega^*_{2}} = \frac {\omega^*_3 \cdots \omega^*_{m(t)}}{(1+\omega^*_{2}) \cdots (1+\omega^*_{m(t)})} \\
    \theta^*_1(t) &=& \frac{\theta^*_2(t) \omega^*_2}{1 + \omega^*_{1}} = \frac {\omega^*_{2} \cdots \omega^*_{m(t)}}{(1+\omega^*_{1})\cdots(1+\omega^*_{m(t)})}. \\ 
\end{eqnarray*}
    We can now plug in the known values of $\omega^*_i$ and find that
    $$\bm{\theta}^{*}_{i}(t) = \left( \frac{z(i)}{z(m(t))} \right)^{\frac{1}{1 - p}} - \left( \frac{z(i -1)}{z(m(t))} \right)^{\frac{1}{1 - p}} \quad\forall 1 \leq i \leq m(t)$$
    This argument holds for any $m(t)\geq 1$, and hence we have fully specified the optimal allocation function.
\end{proof}

Taken together, the results of this section yield an expression for weighted flow time under the optimal allocation policy.  This expression is stated in Theorem \ref{thm:optt}.
\optt

\section{Discussion and Evaluation}
\label{sec:eval}
We now examine the impact of the results shown in Section \ref{sec:slow}.

Section \ref{sec:gen} shows how the results of Section \ref{sec:slow} can be applied to provide the optimal policy with respect to mean slowdown and mean flow time.

We perform a numerical evaluation of heSRPT in Section \ref{sec:numerical}.
We compare heSRPT to several competitor policies from the literature and show that heSRPT not only outperforms these policies as expected, but often reduces slowdown by over 30\%.

The above results apply in the common case where all jobs are present at time 0, but it is also interesting to consider the online case where jobs arrive over time.
To address the online case, we use heSRPT as the basis of a heuristic policy.
Section \ref{sec:numerical:online} compares our heuristic policy, Adaptive-heSRPT, to other heuristic allocation policies from the literature.
Adaptive-heSRPT vastly outperforms other heuristic policies in this online case.

\subsection{Applying heSRPT}
\label{sec:gen}
The optimality results of Section \ref{sec:opt} were stated in terms of any weighted flow time metric where weights favor small jobs.
Specifically, this class of metrics can be used to find the optimal policy with respect to several popular metrics including mean slowdown and mean flow time.
This is summarized in the following corollary.
\optflow
This corollary follows directly from the definitions of mean slowdown and mean flow time, respectively.
Specifically, our results hold in these cases because the weights used in both cases favor small jobs.

One might ask which, if any, of our results hold in the case where weights do not favor small jobs.
If weights do not favor small jobs it is easy to construct an example where one should not complete jobs in SJF order.
Giving a job of any size a sufficiently high weight can cause it to complete first under the optimal policy.
In addition to obviously contradicting Theorem \ref{thm:sjf}, the SJF completion order was exploited in the proof of Theorem \ref{thm:opt-omega}.
It is worth noting, however, that regardless of the weights used, the optimal policy will obey the scale free property of Theorem \ref{thm:sf}.

\begin{figure*}[t]
\centering
\includegraphics[width=.95\textwidth]{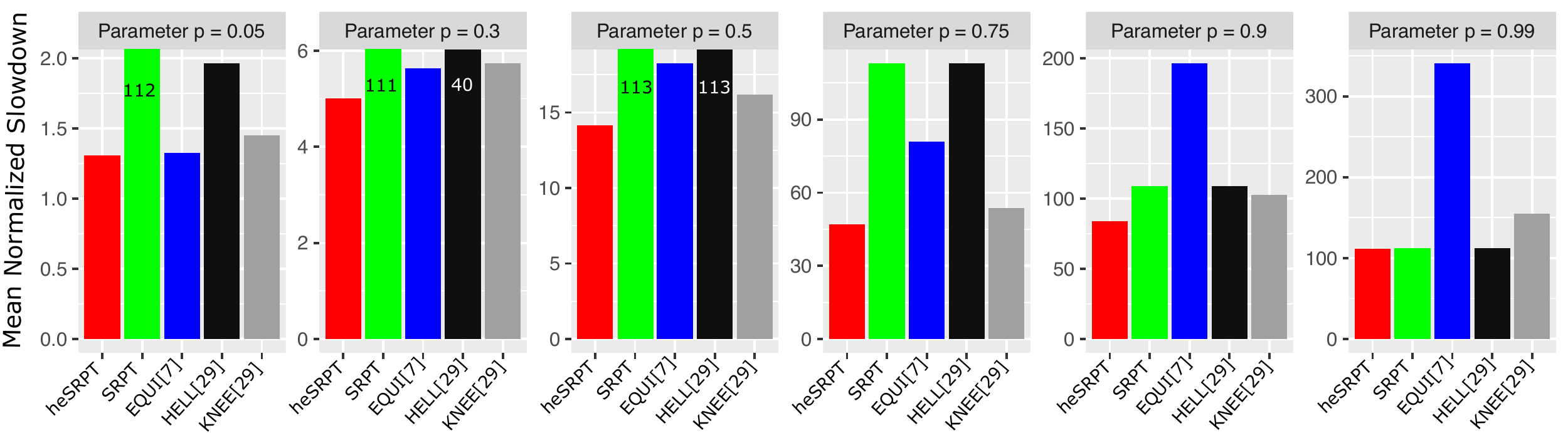}

    \caption{Mean slowdown of heSRPT and allocation policies from the literature in the offline setting with all jobs present at time 0.  Evaluation assumes $N=1,000,000$ servers and $M=500$ jobs whose sizes are Pareto($\alpha=.8$) distributed.  Each graph shows mean slowdown for one value of the speedup parameter, $p$, where $s(k)=k^p$.  heSRPT often dominates by over 30\%.}
\label{fig:bars}
\end{figure*}

\subsection{Numerical Evaluation: Offline Setting}
\label{sec:numerical}
We now compare the optimal mean slowdown under heSRPT with the mean slowdown under policies from the literature.

Our comparison continues to assume that all jobs are present at time 0.
While we have a closed form expression for the optimal mean slowdown under heSRPT, we wish to compare heSRPT to policies for which there is no closed form analysis.
Hence, we perform a numerical analysis of heSRPT and several policies from the literature.

We compare heSRPT to the following list of competitor policies:

     {\bf SRPT} allocates the entire system to the single job with shortest remaining processing time.  While SRPT is optimal when $p=1$, we expect SRPT to perform poorly when jobs make inefficient use of servers.  When all jobs are present at time 0, SRPT is equivalent to the RS policy \cite{wierman2005nearly} which minimizes mean slowdown in an M/G/1.

     {\bf EQUI} allocates an equal fraction of the servers to each job at every moment in time.  EQUI has been analyzed using competitive analysis \cite{edmonds2009scalably,Edmonds1999SchedulingIT} in similar models of parallelizable jobs, and was shown to be optimal in expectation when job sizes are unknown and exponentially distributed \cite{berg2018}.
        Other policies such as \emph{Intermediate-SRPT} \cite{im2016competitively} reduce to EQUI in our model where the number of jobs, $M$, is assumed to be less than the number of servers, $N$.

     {\bf HELL} is a heuristic policy proposed in \cite{lin2018model} which, similarly to heSRPT, tries to balance system efficiency with biasing towards short jobs.  HELL defines a job's efficiency to be the function $\frac{s(k)}{k}$.  HELL then iteratively allocates servers. In each iteration, HELL identifies the job which can achieve highest ratio of efficiency to remaining processing time, and allocates to this job the servers required to achieve this maximal ratio.  This process is repeated until all servers are allocated.  While HELL is consistent with the goal of heSRPT, the specific ratio that HELL uses is just a heuristic.

     {\bf KNEE} is the other heuristic policy proposed in \cite{lin2018model}.  
    KNEE defines a job's \emph{knee allocation} to be the number of servers for which the job's marginal reduction in run-time falls below some threshold, $\alpha$.  
    KNEE then iteratively allocates servers.
    In each iteration, KNEE identifies the job with the lowest knee allocation and gives this job its knee allocation.
    This process repeats until all servers are allocated.
    Because there is no principled way to choose this $\alpha$, we perform a brute-force search of the parameter space and present the results given the best $\alpha$ parameter we found.
    Hence, results for KNEE are an optimistic prediction of KNEE's performance.

%Given that this is the first paper to address the problem of minimizing mean slowdown for parallelizable jobs, these competitor policies were designed with the intent of minimizing mean flow time.
%Luckily, because our results also provide the optimal policy with respect to mean flow time (see Theorem \ref{thm:optflow}) we can also compare these competitor policies to the policy which minimizes mean flow time.
%We define the normalized slowdown of a job as 
%$$\mathcal{S}^P_i = \frac{T^P_i}{x_i/s(N)}.$$
%Instead of weighting a job's flow time by its running time on a single dedicated server, we weight each completion time by the job's running time on $N$ dedicated servers.
%This facilitates comparing slowdown across systems with different numbers of servers and across workloads with different levels of parallelizability.
%A policy minimizes mean normalized slowdown if and only if it minimizes mean slowdown.
%We will only consider mean slowdown for the majority of the paper, but we will examine mean normalized slowdown for the our numerical evaluation in Section \ref{sec:eval}.

We evaluate heSRPT and the competitor policies in a system of $N=1,000,000$ servers and $M=500$ jobs whose sizes are Pareto$(\alpha=.8)$ distributed.
The speedup parameter $p$ is set to values between 0.05 and 0.99.
Figure \ref{fig:bars} shows the results of this analysis.

heSRPT outperforms every competitor policy in every case as expected.
When $p$ is low, EQUI is within 1\% of heSRPT, but EQUI is over $3\times$ worse than heSRPT when $p=0.99$.
Conversely, when $p=0.99$, SRPT is nearly optimal.
However, SRPT is an order of magnitude worse than heSRPT when $p=0.05$.
While HELL performs similarly to SRPT in most cases, it is $50\%$ worse than optimal when $p=0.05$.
The KNEE policy is the best of the competitor policies that we consider.
In the worst case, when $p=0.99$, KNEE is roughly $50\%$ worse than heSRPT.
However, these results for KNEE required brute-force tuning of the allocation policy.
We examined several other job size distributions and in all cases we saw similar trends.

In \ref{sec:numericalflow} we compare these competitor policies to the optimal policy with respect to mean flow time.

\begin{figure*}[t]
\centering
\includegraphics[width=.95\textwidth]{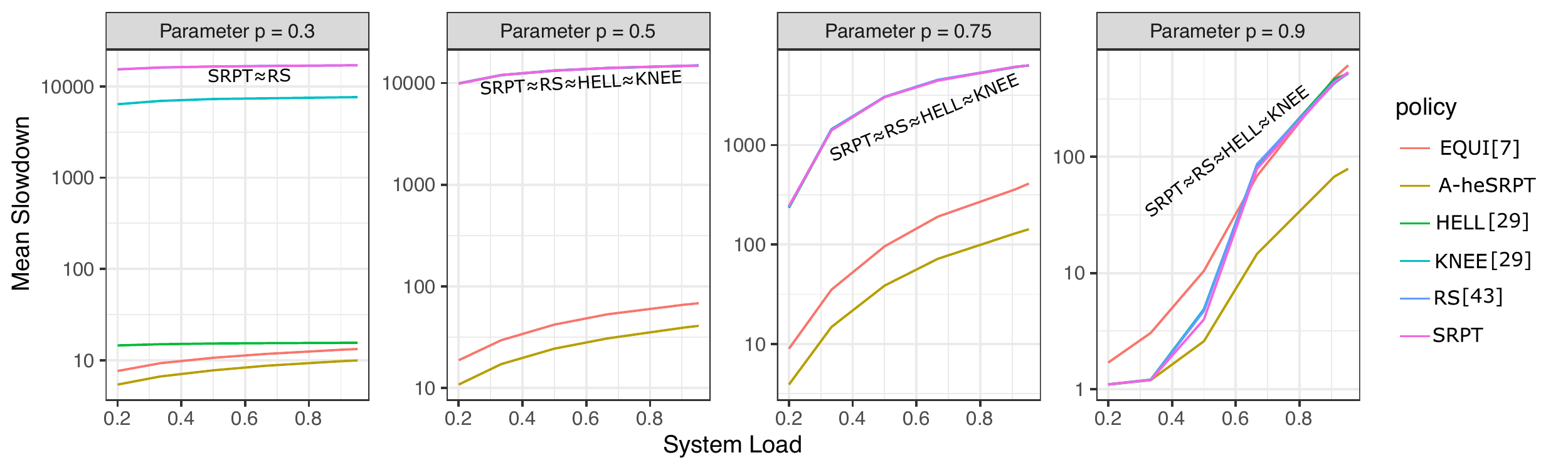}

    \caption{Mean slowdown of A-heSRPT and policies from the literature in the online setting. Simulations assume $N=10,000$ servers and a Poisson arrival process.  Job sizes are Pareto($\alpha=1.5$) distributed.  Each graph shows mean slowdown for one value of the speedup parameter, $p$, where $s(k)=k^p$.  A-heSRPT often dominates by an order of magnitude.}
\label{fig:online_bars}
\end{figure*}
\subsection{Numerical Evaluation: Online Setting}\label{sec:numerical:online}
While we have shown that heSRPT provides the optimal mean slowdown in the case where all jobs are present at time 0, minimizing the slowdown of a stream of arriving parallelizable jobs remains an open theoretical problem.

To understand the added complexity of the online setting, consider the problem of allocating servers to two jobs at time 0, while knowing that an arrival will occur at time 1.
The optimal policy might try to complete one job quickly and allow the arrival to compete for servers with the second job, or it might try and complete both of the original jobs before time 1 in order to reduce the flow time of the arriving job.
In general, it is no longer sufficient to compare two jobs and decide which to complete first.
One must also decide how many jobs to complete before the next arrival.
This prevents the results in this paper from generalizing cleanly to the online case.

We therefore use simulation to compare the performance of heSRPT to the competitor policies described above in an online setting where jobs arrive over time.
For this comparison, we define a heuristic online policy, inspired by heSRPT, which we refer to as Adaptive-heSRPT (A-heSRPT).
A-heSRPT recalculates the allocation to each job every time a job departs from \emph{or arrives to} the system, using the heSRPT allocations as defined in Theorem \ref{thm:opt}.
All of the other competitor policies we examine generalize to online setting in a similar way, reevaluating their allocation decisions on each arrival or departure.

For the sake of completeness, we also examine an additional competitor policy in the online case, the {\bf RS} policy \cite{wierman2005nearly} which is known to minimize mean slowdown online for non-parallelizable jobs in an M/G/1 system \cite{hyytia2012minimizing}.
The RS policy allocates all servers to the job with the lowest product of remaining size (R) times initial size (S).
RS is identical to SRPT if all jobs are present at time 0, but must be considered separately in the online case.

Figure \ref{fig:online_bars} shows that A-heSRPT outperforms all competitor policies in every case, \emph{often by several orders of magnitude}.
When jobs are highly parallelizable, policies such as SRPT and RS which aggressively favor small jobs can occasionally compete with A-heSRPT.
However, as load increases and the allocation decision becomes more complex, A-heSRPT vastly outperforms the competition.
Conversely, when jobs have a low level of parallelizability (low $p$), SRPT, RS, HELL and KNEE fail to maintain the efficiency of the system and perform poorly.
EQUI, which maximizes efficiency but does not favor small jobs, follows the opposite trend -- EQUI performs decently when $p$ is low, but it performs poorly when $p$ is high.
A-heSRPT is the only policy which is able to handle high and low values of $p$ over a range of system loads.

\section{Conclusion}
Modern data centers largely rely on users to decide how many servers to use to run their jobs.
When jobs are parallelizable, but follow a sublinear speedup function, allowing users to make allocation decisions can lead to a highly inefficient use of resources.
We propose to instead have the system control server allocations, and we derive the first optimal allocation policy which minimizes mean slowdown for a set of $M$ parallelizable jobs.
Our optimal allocation policy, heSRPT, leads to significant improvement over existing allocation policies suggested in the literature.
The key to heSRPT is that it finds the correct balance between overall system efficiency and favoring short jobs.
We derive an expression for the optimal allocation, at each moment in time, in closed form.

\section*{References}
\bibliography{bibshort}{}
\bibliographystyle{plain}
\clearpage
\appendix
%\section*{Appendix}
\section{Proof of Theorem \ref{thm:constant}}
\label{sec:constant}
\constant*
\begin{proof}
    Consider any time interval $[t_1, t_2]$ during which no job departs the system, and hence $m^*(t_1) = m^*(t_2)$.  
    Assume for contradiction that under the optimal policy $\bm{\theta}^*(t_1) \neq \bm{\theta}^*(t_2)$.  
    To produce a contradiction, we will show that the weighted flow time under the optimal policy can be improved by using a \emph{constant} allocation during the time interval $[t_1,t_2]$.
    The constant allocation we use is equal to the average value of $\bm{\theta}^*(t)$ during the interval $[t_1, t_2]$.

Specifically, consider the allocation function $\overline{\bm{\theta}^*(t)}$ where
$$\overline{\bm{\theta}^*(t)}=
    \begin{cases}
        \frac{1}{t_2 -t_1}\int_{t_1}^{t_2}\bm{\theta}^*(T)dT  &\forall t_1 \leq t \leq t_2\\
    \bm{\theta}^*(t)  &\mbox{otherwise}.
\end{cases}$$
Note that $\overline{\bm{\theta}^*(t)}$ is constant during the interval $[t_1,t_2]$.
    Furthermore, because $\sum_{i=1}^{m(t)}\theta_i^*(t) \leq 1$ for any time $t$, $\sum_{i=1}^{m(t)}\overline{\theta_i^*(t)} \leq 1$ at every time $t$ as well, and $\overline{\bm{\theta}^*(t)}$ is therefore a feasible allocation function.
Because the speedup function, $s$, is a strictly concave function, Jensen's inequality gives 
    $$\int_{t_1}^{t_2}s\left(\overline{\bm{\theta}^*(t)}\right)dt \geq \int_{t_1}^{t_2}s\left(\bm{\theta}^*(t)\right)dt$$
    and for at least one job, $i$, the above inequality will be strict.
Hence, the residual size of each job under the allocation function $\overline{\bm{\theta}^*(t)}$ at time $t_2$ is at most the residual size of that job under $\bm{\theta}^*(t)$ at time $t_2$, and the residual time of job $i$ is strictly less under the new policy at time $t_2$.
This guarantees that no jobs will have its flow time increased by this transformation and at least one job will have its flow time decreased by this transformation.
We have thus constructed a policy with a lower weighted flow time than the optimal policy, a contradiction.
\end{proof}

%\section{Proof of Theorem \ref{thm:helrpt}}
%\label{sec:moremake}
%\helrpt*
%\begin{proof}
%    We wish to find $\bm{\gamma}^*(t)$ such that, for any jobs $i$ and $j$,
%    $$\frac{x_i}{s(\gamma_i^*(t))} = \frac{x_j}{s(\gamma_j^*(t))} \qquad \forall t\leq T^*_{max}.$$
%    This implies that for any $i$ and $j$
%    $$\frac{x_i^{1/p}}{x_j^{1/p}} = \frac{\gamma^*_i(t)}{\gamma^*_j(t)}$$
%    Furthermore, we know that
%    $$\gamma_1^*(t) + \gamma_2^*(t) + \ldots + \gamma_M^*(t) = 1$$
%    and thus, for any job $i$,
%    $$\frac{x_i^{1/p} \gamma_i^*(t)}{x_i^{1/p}}+ \frac{x_2^{1/p} \gamma_i^*(t)}{x_i^{1/p}}+\ldots+ \frac{x_M^{1/p} \gamma_i^*(t)}{x_i^{1/p}}= 1$$
%    Rearranging, we have
%    $$\gamma^*_i(t) = \frac{x_i^{1/p}}{\sum_{j=1}^M x_j^{1/p}}$$
%    as desired.
%    Under this policy, the $T^*_{max}$ is equal to the completion time of any job, $i$, which is
%    $$\frac{x_i}{s\biggl(\frac{x_i^{1/p}}{\sum_{j=1}^{M}x_j^{1/p}}\biggr)}= ||\bm{X}||_{1/p}$$
%    as desired.
%
%\end{proof}
\section{Proof of Lemma \ref{lem:scaling}}
\label{sec:morescaling}
\scaling*
\begin{proof}
    Consider the policy $\bm{\theta}'(t)$ in a system which runs at $(1-\beta)^p = s(1-\beta)$ times the speed of the original system on $[0, T]$.
    The service rate of any job in this system on $[0,T]$ is given by
    $$ s(1-\beta) \cdot s(\bm{\theta}'(t) \cdot N) = s((1-\beta) \cdot \bm{\theta}'(t) \cdot N) = s(\bm{\theta}(t) \cdot N).$$
    Since, at any time  $t'\in[0,T]$, the service rate for any job $i$ is the same under $\bm{\theta}(t)$ as it is under $\bm{\theta}'(t)$ in a system which is $(1-\beta)^p$ times as fast, the same amount of work is done on job $i$ by time $T$ in both systems.
\end{proof}
\section{Proof of Lemma \ref{lem:increase}}
\label{sec:addl}
%\begin{restatable}{lem}{om}
%    \label{lem:om}
%We define
%    $$\omega_{k}^P = \frac{1}{\left(\frac{z(k)}{z(k-1)}\right)^{\frac{1}{1-p}} - 1} \qquad \forall 1 < k \leq M$$
%    and $\omega^P_1 = 0$. Then $\omega^P_k$ is increasing in $k$ for $1 \leq k \leq M$. 
%\end{restatable}
%\begin{proof}
%    Note that since $\frac{k}{k-1}$ is decreasing in $k$,
%    the denominator of the expression for $\omega^P_k$ is decreasing for $1 < k \leq M$.
%    Hence $\omega^P_k$ is increasing when $1 < k \leq M$, and since $\omega^P_1 = 0 < \omega^P_2$, $\omega^P_k$ is increasing for $1 \leq k \leq M$.
%\end{proof}

\begin{restatable}{lem}{increasing}\label{lem:increase}
    Let $\bm{\Theta}^P(t)$ be the allocation function which satisfies the following first-order conditions:
    $$\frac{\partial \Sl{P}}{\partial \omega^P_{k}} = 0 \qquad \forall 1 \leq k \leq M.$$
    Then for any $t$, $\Theta^P_k(t)$ is increasing in $k$ for $1 \leq k \leq m(t)$.
\end{restatable}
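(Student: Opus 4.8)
The plan is to first put $\Theta^P_k(t)$ into closed form and then reduce the claim to a convexity statement about the partial–weight sequence $z(\cdot)$. Recall that the first-order conditions (solved just above this lemma) give
$$\omega^P_k = \left[\left(\tfrac{z(k)}{z(k-1)}\right)^{\frac{1}{1-p}}-1\right]^{-1}.$$
Writing $S_k=\sum_{j=1}^{k}\Theta^P_j(t)$ for the cumulative allocation and using the scale-free relation $\omega^P_k=S_{k-1}/\Theta^P_k$, I get $1+\omega^P_k=S_k/\Theta^P_k$, hence
$$\frac{S_k}{S_{k-1}}=\frac{1+\omega^P_k}{\omega^P_k}=\left(\frac{z(k)}{z(k-1)}\right)^{\frac{1}{1-p}}.$$
Telescoping this ratio down from $k=m(t)$, where $S_{m(t)}=1$ because all servers are allocated, yields $S_k=\bigl(z(k)/z(m(t))\bigr)^{\frac{1}{1-p}}$, and therefore (the same algebra used to prove Theorem \ref{thm:opt})
$$\Theta^P_k(t)=S_k-S_{k-1}=\left(\frac{z(k)}{z(m(t))}\right)^{\frac{1}{1-p}}-\left(\frac{z(k-1)}{z(m(t))}\right)^{\frac{1}{1-p}}.$$

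With this closed form in hand, the inequality $\Theta^P_{k+1}(t)>\Theta^P_k(t)$ is, after dividing by the positive constant $z(m(t))^{-\frac{1}{1-p}}$, exactly the statement that the sequence $g(k):=z(k)^{\frac{1}{1-p}}$ has strictly increasing first differences, i.e. $g(k+1)-g(k)>g(k)-g(k-1)$. So the entire lemma reduces to showing that $k\mapsto z(k)^{\frac{1}{1-p}}$ is a strictly convex sequence (with the convention $z(0)=0$).

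To establish this I would set $q=\frac{1}{1-p}>1$ and $\phi(x)=x^{q}$, which is strictly increasing and strictly convex on $[0,\infty)$, so $\phi'$ is strictly increasing and positive on $(0,\infty)$. By the mean value theorem there exist $\xi_2\in(z(k-1),z(k))$ and $\xi_1\in(z(k),z(k+1))$ with $g(k)-g(k-1)=\phi'(\xi_2)\,w_k$ and $g(k+1)-g(k)=\phi'(\xi_1)\,w_{k+1}$, where I used $z(k)-z(k-1)=w_k$. Since $\xi_2<z(k)<\xi_1$ we have $0<\phi'(\xi_2)<\phi'(\xi_1)$. The decisive use of the hypothesis enters here: because weights favor small jobs, $0<w_k\le w_{k+1}$, so
$$\phi'(\xi_1)\,w_{k+1}\ge \phi'(\xi_1)\,w_k>\phi'(\xi_2)\,w_k,$$
i.e. $g(k+1)-g(k)>g(k)-g(k-1)$, which is the desired strict increase (the case $k=1$ is covered identically since $z(0)=0$ and $\phi'(\xi_2)>0$ for $\xi_2>0$).

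The main obstacle is the reduction to convexity of $z(\cdot)^{q}$ together with the single step where the ``weights favor small jobs'' assumption is indispensable: if the $w_k$ were not monotone, the factor $w_{k+1}/w_k$ could drop below one and the inequality could fail. This is consistent with the observation (after Corollary \ref{thm:optflow}) that both SJF optimality and this monotonicity can break for general weights. Everything else is routine: the telescoping that produces the closed form and the mean value theorem step that converts monotone weights plus strict convexity of $\phi$ into the second-difference inequality.
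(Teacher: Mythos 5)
Your proposal is correct and follows essentially the same route as the paper: derive the closed form $\Theta^P_k(t)=\bigl(z(k)/z(m(t))\bigr)^{\frac{1}{1-p}}-\bigl(z(k-1)/z(m(t))\bigr)^{\frac{1}{1-p}}$ and then reduce monotonicity in $k$ to the convexity of $x\mapsto x^{\frac{1}{1-p}}$ combined with $w_{k+1}\geq w_k$. The only cosmetic difference is that you invoke the mean value theorem where the paper uses an interval-shifting argument to exploit convexity; both are sound.
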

\begin{proof}
    Following the same argument as Theorem \ref{thm:opt}, we can see that the expression for $\bm{\Theta}^P(t)$ is
    $$\Theta^P_i(t) = \left(\frac{z(i)}{z(m(t))}\right)^{\frac{1}{1-p}} - \left(\frac{z(i-1)}{z(m(t))}\right)^{\frac{1}{1-p}} \qquad \forall 1 \leq i \leq m(t).$$
    Note that $\frac{1}{1-p} > 1$, so $i^{\frac{1}{1-p}}$ is convex in $i$.
    We know that 
    $$z(i) - z(i-1) = w_i$$
    and 
    $$z(i+1) - z(i) = w_{i+1}.$$
    Furthermore, $w_{i+1} \geq w_i$.
    Hence, by convexity,
    \begin{align*}
        z(i)^{\frac{1}{1-p}} - z(i-1)^{\frac{1}{1-p}} &< \left(z(i) + w_i\right)^{\frac{1}{1-p}} - \left(z(i-1) + w_i\right)^{\frac{1}{1-p}}\\
        & < \left(z(i) + w_{i+1}\right)^{\frac{1}{1-p}} - z(i)^{\frac{1}{1-p}}\\
        &< z(i+1)^{\frac{1}{1-p}} - z(i)^{\frac{1}{1-p}}
    \end{align*}
    This implies that $\Theta^P_i(t)$ is increasing in $i$ for $1 \leq i \leq m(t)$.
\end{proof}

%\begin{restatable}{lem}{coef}
%We define
%    $$\omega_{k}^P = \frac{1}{\left(\frac{k}{k-1}\right)^{\frac{1}{1-p}} - 1} \qquad \forall 1 < k \leq M.$$
%    and $\omega_1^P = 0$.  Then the expression
%    $$\Delta(k) = k  s(1 + \omega^P_{k})  - (k-1) s(\omega^P_{k})$$
%    is increasing in $k$.
%\end{restatable}
%\begin{proof}
%    Let $c=\frac{1}{1-p}$.  We can rewrite $\Delta(k)$ as
%    $$(k^c - (k-1)^c)^{1-p}.$$
%    We then have that
%    $$\frac{d}{dk} \Delta(k) = (1-p)(k^c - (k-1)^c)^{-p}(c(k^{c-1}-(k-1)^{c-1})).$$
%    This derivative is positive for all $k$, and hence $\Delta(k)$ is increasing in $k$.
%\end{proof}
\begin{figure*}[ht!]
\centering
\includegraphics[width=.95\textwidth]{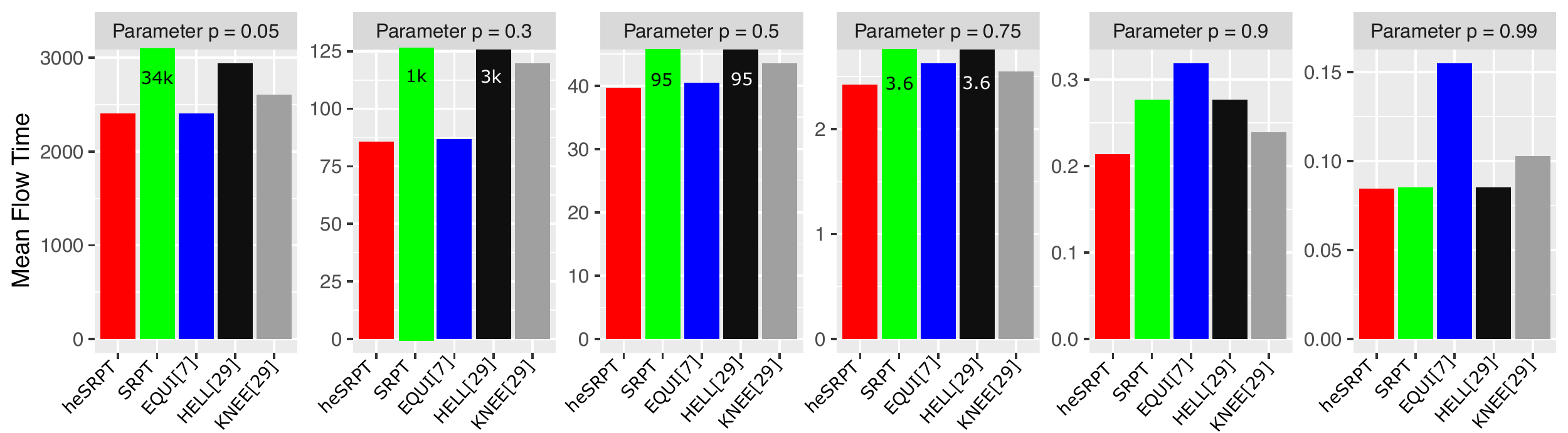}

    \caption{A comparison of the optimal mean flow time under heSRPT to other allocation policies found in the literature.  Each policy is evaluated on a system of $N=1,000,000$ servers and a set of $M=500$ jobs whose sizes are drawn from a Pareto distribution with shape parameter $.8$.  Each graph shows the mean flow time under each policy with various values of the speedup parameter, $p$, where the speedup function is given by $s(k)=k^p$.  heSRPT outperforms every competitor by at least 30\% in at least one case.}
\label{fig:flowbars}
\end{figure*}

\section{Numerical Evaluation for Mean Flow Time}
\label{sec:numericalflow}

\label{sec:compare}
Section \ref{sec:gen} describes how to use heSRPT to obtain the optimal policy with respect to mean flow time.
Because the competitor policies in Section \ref{sec:numerical} were designed to minimize mean flow time, we now compare the optimal mean flow time under heSRPT to the mean flow time of these competitor policies.
The competitor policies remain unchanged from their descriptions in Section \ref{sec:eval}, and the conditions of our analysis (numbers of servers, job size distribution, speedup function) remain the same as is Section \ref{sec:eval} as well.

Figure \ref{fig:flowbars} shows the results of our analysis.
We see that the results with respect to mean flow time are generally similar to the results of Section \ref{sec:eval}.
heSRPT once again outperforms every competitor policy by at least $30\%$ in at least one case.
Hence, the results shown in Section \ref{sec:eval} are not only caused by the fact that the competitor policies optimize for a different metric.
Even when comparing policies with respect to mean flow time, each of the competitor policies can be far from optimal.

\end{document}